\theoremstyle{definition}
\theoremstyle{plain}
\newtheorem{thm1}{Theorem}
\newtheorem{thm2}{Theorem}
\newtheorem{thm3}{Theorem}
\newtheorem{prop}[thm1]{Proposition}
\newtheorem{lemma}[thm2]{Lemma}
\newtheorem{cor}[thm3]{Corollary}
\begin{document}
\SetBgContents{}

\title{Optimal QoS-Aware Channel Assignment in D2D Communications with Partial CSI}
\author{\IEEEauthorblockN{Rui Wang,~\IEEEmembership{Student Member,~IEEE}, Jun Zhang,~\IEEEmembership{Senior Member,~IEEE}, S.H. Song,~\IEEEmembership{Member,~IEEE}, and Khaled B. Letaief,~\IEEEmembership{Fellow,~IEEE}}
\thanks{Manuscript received July 21, 2015; revised February 8, 2016 and June
27, 2016; accepted August 17, 2016. This work was supported by the Hong
Kong Research Grants Council Grant No. 610113. Part of this work has been presented at IEEE Globecom, San Diego, CA, Dec. 2015. The associate editor coordinating the
review of this paper and approving it for publication was A. Abrardo.}
\thanks{R. Wang, J. Zhang and S.H. Song are with the Department of Electronic and Computer Engineering, Hong Kong University of Science and Technology, Clear Water
Bay, Kowloon, Hong Kong (e-mail: rwangae@ust.hk; eejzhang@ust.hk;  eeshsong@ust.hk).}
\thanks{K. B. Letaief is with the Department of Electronic and Computer
Engineering, Hong Kong University of Science and Technology, Clear Water
Bay, Kowloon, Hong Kong, and also with Hamad bin Khalifa University, Doha,
Qatar (e-mail: eekhaled@ust.hk; kletaief@hkbu.edu.qa).}}

\maketitle
\begin{abstract}
  In this paper, we propose effective channel assignment algorithms for network utility maximization in a cellular network with underlaying device-to-device (D2D) communications. A major innovation is the consideration of partial channel state information (CSI), i.e., the base station (BS) is assumed to be able to acquire `partial' instantaneous CSI of the cellular and D2D links, as well as, the interference links. In contrast to existing works, multiple D2D links are allowed to share the same channel, and the quality of service (QoS) requirements for both the cellular and D2D links are enforced. We first develop an optimal channel assignment algorithm based on dynamic programming (DP), which enjoys a much lower complexity compared to exhaustive search and will serve as a performance benchmark. To further reduce complexity, we propose a cluster-based sub-optimal channel assignment algorithm. New closed-form expressions for the expected weighted sum-rate and the successful transmission probabilities are also derived. Simulation results verify the effectiveness of the proposed algorithms. Moreover, by comparing different partial CSI scenarios, we observe that the CSI of the D2D communication links and the interference links from the D2D transmitters to the BS significantly affects the network performance, while the CSI of the interference links from the BS to the D2D receivers only has a negligible impact.

\end{abstract}
\begin{IEEEkeywords}
Cellular networks, device-to-device communications, dynamic programming, partial CSI, resource allocation.
\end{IEEEkeywords}
\IEEEpeerreviewmaketitle

\section{Introduction}

\subsection{Motivation and Related Works}

The future fifth generation (5G) cellular networks are expected to achieve a $1000 \times$ higher area capacity, a roughly 1 ms roundtrip latency, and a $100 \times$ reduced cost per bit \cite{5G}. Device-to-device (D2D) communications is a promising approach to improve spectral efficiency and network coverage, as well as lower delays and power consumption, and it has been considered as one of the key technologies in 5G networks \cite{design,d2d-5g,conf}. Unlike the first four generations of cellular networks, where all communication links are forced to be routed via base stations (BSs), mobile users may communicate directly with each other through D2D links, under the control of the BSs in cellular networks \cite{d2d-5g,D2D,blue}. However, without effective network operation, the underlaying D2D communications may generate severe interference to the existing cellular network, and suffer strong interference from cellular users. Thus, efficient resource allocation schemes for controlling interference will be crucial for D2D communications \cite{survey,D2DSC}. Different from conventional cellular networks \cite{OFDM}, resource allocation in D2D communications faces new challenges with the presence of intra-cell interference \cite{design}, which bears similarities to cognitive radio (CR) systems \cite{cognitive1,cognitive2,cognitive3}. One major difference between the underlaying D2D communications and CR systems is that the D2D links are controlled by BSs, while the secondary users (SUs) in CR systems are acting spontaneously \cite{dcr1,dcr2}.

Resource allocation in D2D networks faces two main challenges. Firstly, when multiple D2D links are allowed to access the same channel, the dynamic resource allocation problem becomes an NP-hard problem \cite{nphard}. Secondly, in cellular networks with underlaying D2D communications, the total number of links, especially interference links, is usually very large. As a result, overwhelming overheads will be incurred for collecting the channel state information (CSI) of all these links \cite{partial-m}. Therefore, the commonly considered full CSI scenario is not practical, and the partial CSI case, where instantaneous CSI of part of the communication and interference links is unknown at the BS, should be considered. In this paper, we will develop effective channel assignment algorithms for D2D communications with partial CSI.

In order to simplify the NP-hard resource allocation problem, some existing works have assumed that at most one D2D link can access one channel \cite{onelink,2link,oneD2D,one2one}. In \cite{onelink} and \cite{2link}, the authors obtained the optimal resource sharing strategy to maximize the throughput with one cellular link and one D2D link, and an alternative greedy heuristic algorithm was proposed to maximize the system sum-rate in \cite{oneD2D}. Feng \emph{et al.} \cite{one2one} proposed an optimal resource sharing scheme based on maximum weighted bipartite matching. However, it cannot yield good spectral efficiency when at most one D2D link can access one channel, especially in a dense D2D network. Recent works have started to propose sub-optimal algorithms in the scenario where multiple D2D links may use the same channel. Zhang \emph{et al.} \cite{nphard,ingra} studied the recourse allocation problem in order to maximize the system throughput, adopting an interference-aware graph-based algorithm to get a sub-optimal solution. Then, a reverse iterative combinatorial auctions game was utilized to develop a resource sharing scheme to maximize the system sum-rate in \cite{ica1} and \cite{ICA}. However, these works did not take the reliability requirement of the D2D links into consideration. As the number of links sharing the same channel increases, cellular and D2D users may suffer poor performance even if they are admitted to the system. Thus, D2D networks where multiple D2D links are allowed to access the same channel should be investigated, and quality of service (QoS) requirements should be guaranteed for both cellular and D2D links.

Most previous works have focused on the full CSI scenario, but, recently, some papers have considered the partial CSI case. Considering that the BS cannot acquire CSI of the interference links between user devices, a maximum weighted bipartite matching algorithm was applied in \cite{feng2014qos} to get the optimal recourse allocation scheme in a D2D network. However, this work allowed at most one D2D link to access one channel. In \cite{c-d-p}, multiple D2D links were allowed to access the same channel with the assumption that the BS only has knowledge of the CSI of cellular links, and a centralized channel assignment algorithm and a distributed power control algorithm were developed in the high SINR region. However, the high SINR assumption does not usually hold in D2D communications. Furthermore, all the previous works considering partial CSI have only dealt with one particular partial CSI scenario, e.g., in \cite{feng2014qos}, the BS was assumed to know the CSI of all the links except the interference links between user devices. Thus, the relative importance of the CSI of different links in D2D networks is still unknown, and it is the question that will be explored in this paper.

\subsection{Contributions}

In this paper, we will investigate the channel assignment problem for maximizing the network utility in a cellular network with underlaying D2D communications. Different partial CSI scenarios will be considered to reveal the relative importance of the CSI for different links. To improve spectral efficiency, multiple D2D links may access the same channel. Moreover, to guarantee the per link performance, a minimum successful transmission probability requirement will be enforced for each active link. The major contributions of this paper are summarized as follows:
\begin{enumerate}
  \item To develop effective algorithms, explicit expressions are needed for the expected weighted sum-rate and successful transmission probabilities, which form the objective function and part of the constraints, respectively. We thus derive tractable expressions with different channel fading models, i.e., Rayleigh fading and Nakagami fading. In particular, with Rayleigh fading channels, closed-form expressions are obtained. For the full CSI scenario, these expressions can be simplified.
    \item To find the optimal solution for the NP-hard channel assignment problem, a dynamic programming (DP) algorithm is firstly proposed. This DP algorithm significantly reduces complexity compared to exhaustive search. While the complexity is still prohibitively high for a large system, it can well serve as a performance benchmark for systems with small to medium sizes. To the best of our knowledge, we are the first to propose an optimal algorithm for the channel assignment problem in D2D networks while multiple D2D links may share the same channel.
    \item To further reduce complexity, we propose a practical sub-optimal algorithm. The main difficulties come from the integer variables, and the non-convexity. Bipartite matching is adopted as the main tool to deal with the discrete variables. To define the weights in the bipartite graph according to the non-convex objective and constraints, we first divide all the links into several non-overlapping clusters, where links in the same cluster will share the same channel. A maximum weighted bipartite matching algorithm is then used to obtain the final channel assignment.
  \item Extensive simulation results are provided to demonstrate the effectiveness of the proposed algorithms and provide valuable design insights for D2D communications. In particular, the proposed sub-optimal algorithm outperforms existing ones, while approaching the optimal DP algorithm. It is demonstrated that the D2D links prefer sharing the uplink spectrum rather than the downlink spectrum. Moreover, by comparing four different partial CSI scenarios and the full CSI scenario, we observe that the knowledge of the CSI of the D2D communication links and the interference links from the D2D transmitters to the BS will have a significant effect on the performance, while the knowledge of the CSI of the interference links from the BS to the D2D receivers only has a negligible influence.
\end{enumerate}

\subsection{Organization}
The remainder of this paper is organized as follows. System model is presented in Section II. In Section III, a network utility maximization problem is formulated. The optimal DP algorithm is proposed in Section IV, while the cluster-based sub-optimal algorithm is proposed in Section V. The simulation results are shown in Section VI. Finally, Section VII concludes the paper.

\section{System Model}
\begin{figure}[!t]
  \centering
  \includegraphics[width=3.5in]{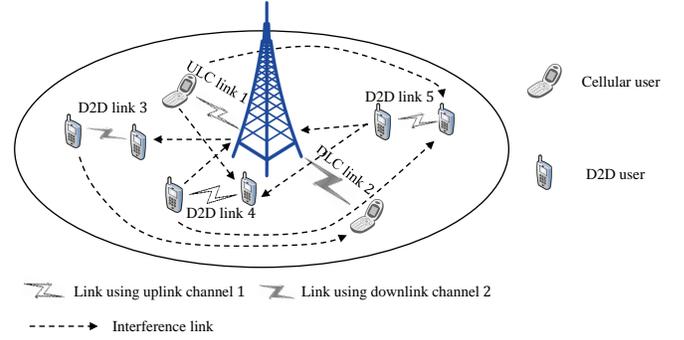}
  \caption{A sample network with two cellular links and three D2D links.}
  \label{model}
\end{figure}
In this section, we will first introduce assumptions on channel sharing among cellular and D2D links, and then the channel model will be presented.
\subsection{Channel Sharing Among Cellular and D2D Links}
As shown in Fig. \ref{model}, we consider a cellular network where cellular and D2D links coexist. Both uplink and downlink transmissions are considered. There are $N_{uc}$ uplink cellular (ULC) links, $N_{dc}$ downlink cellular (DLC) links, and $N_d$ D2D links, whose index sets are denoted as $\mathcal{C}_u = \{1,2,...,N_{uc}\}$, $\mathcal{C}_d = \{N_{uc}+1,N_{uc}+2,...,N_{uc}+N_{dc}\}$ and $\mathcal{D} = \{N_{uc}+N_{dc}+1,N_{uc}+N_{dc}+2,...,N_{uc}+N_{dc}+N_d\}$, respectively. Correspondingly, there are in total $N_c = N_{uc}+N_{dc}$ cellular links and $N = N_{c}+N_{d}$ communication links, whose index sets are denoted as $\mathcal{C} = \mathcal{C}_u \cup \mathcal{C}_d  = \{1,2,...,N_{c}\}$ and $\mathcal{S} = \mathcal{C} \cup \mathcal{D}  = \{1,2,...,N\}$, respectively. The transmit power of link $j \in \mathcal{S}$ is denoted as $p^t_j$. Orthogonal Frequency Division Multiple Access (OFDMA) is considered for both the cellular and D2D communications, as in an LTE network \cite{LTE}, and there are $M_u$ uplink channels and $M_d$ downlink channels, whose index sets are denoted as $\mathcal{CH}_u = \{1,2,...,M_u\}$ and $\mathcal{CH}_d = \{M_u+1,M_u+2,...,M_u+M_d\}$, respectively. Correspondingly, there are in total $M = M_{u}+M_{d}$ channels, whose index set is denoted as $\mathcal{CH} = \mathcal{CH}_u \cup \mathcal{CH}_d  = \{1,2,...,M\}$. Note that the channel resources are limited, it may happen that the cellular users cannot be served simultaneously. In this paper, we assume that cellular user admission control schemes have been applied, e.g., cellular users may be admitted in a round-robin manner to ensure fairness. Thus, the numbers of \emph{active} uplink and downlink cellular links will be no larger than the numbers of uplink and downlink channels, respectively.

Each channel may be shared by both cellular and D2D links, and multiple D2D links are allowed to share the same channel. Considering fairness and simplicity, it is assumed that each communication link, including both the cellular and D2D links, can access at most one channel. To guarantee the priority of cellular links, we assume that all the cellular links can access one channel, while some D2D links may not be allowed to access any channel, and under such circumstances, they are called \emph{inactive D2D links}. Correspondingly, D2D links which access one channel are called \emph{active D2D links}.

\subsection{Channel Model}
Assume that links $j \in \mathcal{S}$ and $z \in \mathcal{S}$ access the same channel $i \in \mathcal{CH}$. We shall use $g_{i,j}$ and $h_{i,z,j}$ to denote the channel gains of link $j$ and the interference link from link $z$ to link $j$, respectively. The channel gains contain the normalized small-scale fading, log-normal shadowing and distance based path loss. The interference channel gain from link $z$ to link $j$ using channel $i$ can be expressed as $h_{i,z,j}=C^{(p)} \beta_{i,z,j} \zeta_{z,j} \left( L_{z,j} \right) ^{-\alpha}$, where $C^{(p)}$ is the path loss constant, $L_{z,j}$ denotes the distance between the transmitter of link $z$ and the receiver of link $j$, $\alpha$ represents the path loss exponent, and $\beta_{i,z,j}$ and $\zeta_{z,j}$ imply small-scale fading and shadow fading, respectively. Similarly, the channel gain of link $j$ can be expressed as $g_{i,j}=C^{(p)} \beta_{i,j,j} \zeta_{j,j} \left( L_{j,j} \right) ^{-\alpha}$. The probability density function (pdf) of the small-scale variation of communication link $j$ is denoted as $f_{j}( \cdot )$, and the interference channel from link $z$ to link $j$ is assumed to be Nakagami fading with parameter $m_{z,j}$. Meanwhile, we assume that small-scale variations are independent among all the links and all the channels. Knowledge of the path loss and shadowing of all links is assumed to be available at the BS, while the instantaneous CSI of small-scale fading is only partially known at the BS, which will be specified later. Additive white Gaussian noise with zero mean and variance $\sigma^2$ is assumed at each receiver. Each of the cellular links and active D2D links is assumed to have a minimum QoS requirement, which is measured by the successful transmission probability, i.e., the probability that the received SINR is above a certain threshold.

\section{Problem Formulation}

In this paper, we investigate the channel assignment problem to maximize a certain network utility function with QoS guarantees for both cellular and D2D links. Let $\rho_{i,j}$, with $i \in \mathcal{CH}$ and $j \in \mathcal{S}$, denote the channel assignment for all cellular and D2D links, which is defined as
\begin{equation} \label{rho}
\rho_{i,j}=\left\{
                \begin{array}{ll}
                1 & \text{if channel } i \text{ is assigned to link } j,\\
                0& \text{otherwise}.
                \end{array}
                \right.
\end{equation}
The set $\mathcal{L}_i$ includes all the links accessing channel $i$, given by
$\mathcal{L}_i=\left\{j|j \in \mathcal{S} \text{ and } \rho_{i,j}=1\right\}$.
The received SINR of link $j$ using channel $i$ can then be given as
\begin{equation} \label{SINR}
\xi_{i,j} \left( \mathcal{L}_i \right) ={\frac{{p^t_j g_{i,j} }}
{{\sigma^2  + \sum\limits_{z \in \mathcal{L}_i \backslash \{j\} } {p^t_z h_{i,z,j} } }}}.
\end{equation}
For channel $i$, we adopt the corresponding utility function as a function of the set of links using channel $i$, denoted as $U_i \left( \mathcal{L}_i \right)$. Note that the function $U_i \left( \cdot \right)$ is an arbitrary function from the sets of links to real numbers. In the partial CSI case, the BS may not have enough CSI to calculate the instantaneous SINR for all links. Thus, each link is assumed to have a minimum successful transmission probability requirement, denoted as $\psi^{\text{min}}_{j}$, and the successful transmission probability is
$\text{Pr} \left[ \xi_{i,j}\left( \mathcal{L}_i \right) \geqslant \xi^{\text{min}}_{j} \right]$, denoted by $p^s_{i,j}$,
where $\xi^{\text{min}}_{j}$ denotes the minimum SINR requirement of link $j$ and $\text{Pr} \left[ \cdot \right]$ denotes probability.
Then, the utility maximization problem can be formulated as\footnote{Note that with this formulation there may be some users that cannot be served, i.e., they cannot meet the QoS requirements. This may cause a fairness issue, which can be relieved by assigning different weights for different users to control their priorities, which is adopted in this paper.}
\begin{align}
\mathop {\max }\limits_{\rho_{i,j} } &\sum\limits_{i \in \mathcal{CH}} { U_i \left( \mathcal{L}_i \right)} ,\label{problem} \\
\text{s.t. }&p^s_{i,j}=\text{Pr} \left[ \xi_{i,j}\left( \mathcal{L}_i \right) \geqslant \xi^{\text{min}}_{j} \right]  \geqslant \psi^{\text{min}}_{j} , \notag \\
&\forall j \in \mathcal{S} \text{ and } \sum\limits_{i \in \mathcal{CH}} {\rho _{i,j} } = 1, \tag{\ref{problem}a}\\
&\sum\limits_{j \in \mathcal{C}} {\rho _{i,j} } \leqslant 1, \rho _{i,j} \in \{0,1\}, \forall i \in \mathcal{CH}, \tag{\ref{problem}b} \\
&\sum\limits_{i \in \mathcal{CH}} {\rho_{i,j} } = 1, \rho _{i,j} \in \{0,1\}, \forall j \in \mathcal{C}, \tag{\ref{problem}c} \\ 
&\sum\limits_{i \in \mathcal{CH}} {\rho_{i,j} } \leqslant 1, \rho _{i,j} \in \{0,1\}, \forall j \in \mathcal{D}, \tag{\ref{problem}d} 
\end{align}
\begin{align}
&\sum\limits_{i \in \mathcal{CH}_d} {\rho_{i,j} } = 0, \rho _{i,j} \in \{0,1\}, \forall j \in \mathcal{C}_u, \tag{\ref{problem}e} \\
&\sum\limits_{i \in \mathcal{CH}_u} {\rho_{i,j} } = 0, \rho _{i,j} \in \{0,1\}, \forall j \in \mathcal{C}_d. \tag{\ref{problem}f}
\end{align}
Constraint (\ref{problem}a) guarantees the QoS requirements of both cellular and active D2D links. Constraint (\ref{problem}b) ensures that different cellular links cannot access the same channel. Constraints (\ref{problem}c) and (\ref{problem}d) imply that each cellular link can access one channel and each D2D link can access at most one channel, respectively, and constraints (\ref{problem}e) and (\ref{problem}f) guarantee that uplink cellular links cannot access downlink channels and downlink cellular links cannot access uplink channels. In the remainder of this section, with the expected weighted sum-rate as an example of the utility function, we will derive closed-form expressions of the utility function and QoS constraints.

\subsection{Expected Weighted Sum-Rate Maximization with Partial CSI}

With partial CSI, the expected weighted sum-rate will be used as the performance metric, where the weights can be adjusted to control the priorities and fairness of different users, e.g., cellular users may have a higher priority than D2D users. Same as \cite{feng2014qos}, we regard the user rate as zero when the outage occurs. The utility function of channel $i$ is then given as
\begin{equation} \label{weighted sr}
 U_i \left( \mathcal{L}_i \right)=\sum\limits_{j \in \mathcal{L}_i} w_j \mathbb{E} \left[ \log \left( 1+\xi_{i,j} \left( \mathcal{L}_i \right) \right) \left| \xi_{i,j}\left( \mathcal{L}_i \right) \geqslant \xi^{\text{min}}_{j} \right. \right]p^s_{i,j},
\end{equation}
where $w_j$ is the weight of link $j$ and $\mathbb{E} \left[ \cdot | \cdot \right]$ denotes the conditional expectation. We denote the expected rate of link $j$ using channel $i$ as $r_{i,j}=\mathbb{E} \left[ \log \left( 1+\xi_{i,j} \left( \mathcal{L}_i \right) \right) \left| \xi_{i,j}\left( \mathcal{L}_i \right) \geqslant \xi^{\text{min}}_{j} \right. \right]p^s_{i,j}$. In a cellular network with underlaying D2D communications, there exists five kinds of links, namely, the cellular communication links, the D2D communication links, the interference links between user devices, the interference links from the BS to the D2D receivers, and the interference links from the D2D transmitters to the BS. In order to compare the importance of the CSI of the different kinds of links, we consider four different scenarios with partial CSI. Table \ref{tab_4case} summarizes the four scenarios.

By allowing multiple D2D links to share the same channel, interference becomes complicated. Since the BS may only acquire the CSI for part of the interference links, we rewrite the SINR of link $j$ using channel $i$ as
\begin{equation} \label{SINRp}
\xi_{i,j} \left( \mathcal{L}_i \right) =
{\frac{{\lambda_{j,j} \beta_{i,j,j} }}
{{\sigma^2  + \sum\limits_{z \in \mathcal{L}_i \backslash \{j\} } {\lambda_{z,j} \beta_{i,z,j} } }}}=
\frac{\lambda_{j,j}\beta_{i,j,j} }
{{\nu + Y_{i,j} }},
\end{equation}
where $\beta_{i,z,j}$, with $z,j \in \mathcal{S}$, denotes the small scale fading gain, $\lambda_{j,j} \triangleq p^t_j g_{i,j}/ \beta_{i,j,j}$ and $\lambda_{z,j} \triangleq p^t_z h_{i,z,j}/ \beta_{i,z,j}$ are products of the transmit power and large scale fading gains. Let $\mathcal{L}'_i$ include all the links for which the BS cannot acquire their small-scale fading gains to link $j$. Then, the noise plus interference power can be divided into two parts, i.e., the sum-power of the noise and the interference links whose CSI is fully known at the BS, denoted by $\nu = \sigma^2 + \sum \limits_{z \in \mathcal{L}_i \backslash \{j\}-\mathcal{L}'_i} \lambda_{z,j} \beta_{i,z,j}$, and the power of interference links whose small scale fading gains are unknown at the BS, denoted by $Y_{i,j} ( \mathcal{L}'_i )=\sum\limits_{z \in \mathcal{L}'_i} {\lambda_{z,j}\beta_{i,z,j}} $. We derive the successful transmission probabilities and expected rates in \emph{Propositions} \ref{probability}, assuming arbitrary fading distributions for the signal links.
\begin{prop} \label{probability}
The successful transmission probability of link $j$ using channel $i$ is given by\footnote{For practical purpose, the infinite series in these expressions can be calculated accurately with finite terms, as shown in \cite{sumofgamma}. \label{footrepeat}}
\begin{align} \label{propnakagami}
p^s_{i,j} = 
\left\{
\begin{array}{ll}
\prod \limits_{z \in \mathcal{L}'_i} \left( \frac{\lambda_{z,j}}{m_{z,j}} \right)^{m_{z,j}}  , &  \multirow{2}{*}{$\text{if } |\mathcal{L}'_i|>0,$} \\
\times \sum \limits_{n=0}^{+ \infty} \frac{\delta_n (\theta^{\max}_j)^{-\rho} C_{i,j} }{\Gamma \left(\rho+n \right)} & \\
\multirow{2}{*}{$\mathbbm{1}[\lambda_{j,j} \beta_{i,j,j} / \nu \geqslant \xi_j^{\min}],$} & \text{if }  |\mathcal{L}'_i|=0 \\
& \text{and } \beta_{i,j,j} \text{ is known}, \\
\multirow{2}{*}{$\int_{\xi^{\min}_j \nu/ \lambda_{j,j}}^{\infty} f_j(x) dx,$}  & \text{if } |\mathcal{L}'_i|=0 \\
& \text{and } \beta_{i,j,j} \text{ is unknown}.
\end{array}
\right.
\end{align}
where $\theta^{\max}_j=\max \limits_{z \in \mathcal{L}'_i} \frac{ \lambda_{z,j}}{m_{z,j}}$, $\rho=\sum \limits_{z \in \mathcal{L}'_i} m_{z,j}$, $\mathbbm{1}[\lambda_{j,j} \beta_{i,j,j} / \nu \geqslant \xi_j^{\min}]$ equals $1$ if $\lambda_{j,j} \beta_{i,j,j} / \nu \geqslant \xi_j^{\min}$ and equals $0$ otherwise, and $\delta_n$ is given by (\ref{refdelta}) in Appendix A.
$C_{i,j}$ is given by
\begin{align} 
&C_{i,j}= \notag \\
&\left\{
\begin{array}{ll}
\gamma \left( \rho +n, \eta_{i,j} \theta^{\max}_j \right) & \text{if } \beta_{i,j,j} \text{ is known}, \\
\int_{ \frac{\xi^{\min}_j \nu}{\lambda_{j,j}} }^{\infty} \gamma \left( \rho +n, \left( \frac{\lambda_{j,j}x}{\xi^{\text{min}}_{j}}- \nu \right) \theta^{\max}_j \right) & \multirow{2}{*}{$\text{if } \beta_{i,j,j} \text{ is unknown,}$} \\
\times f_j(x) dx &
\end{array}
\right.
\end{align}
where $\eta_{i,j}=\max \left(0, \lambda_{j,j}\beta_{i,j,j}/\xi^{\text{min}}_{j}- \nu \right)$. The expected rate of link $j$ using channel $i$ is given by\footref{footrepeat}
\begin{align} \label{expectedrate}
&r_{i,j} = \notag \\
&\left\{
\begin{array} {ll}
 \log(e)
\prod \limits_{z \in \mathcal{L}'_i} \left( \frac{\lambda_{z,j}}{m_{z,j}} \right)^{m_{z,j}} & \multirow{2}{*}{$\text{if } |\mathcal{L}'_i|>0$,} \\
\times \sum \limits_{n=0}^{+ \infty} \frac{\delta_n \mu_{\rho+n-1} (\theta^{\max}_j)^{n}}{\Gamma \left(\rho+n \right)}, & \\
\log(1+\lambda_{j,j} \beta_{i,j,j}/\nu) & \text{if } |\mathcal{L}'_i|=0 \text{ and }  \\
\times \mathbbm{1}[\lambda_{j,j} \beta_{i,j,j} / \nu \geqslant \xi_j^{\min}], & \beta_{i,j,j} \text{ is known}, \\
\multirow{2}{*}{$\int_{\xi^{\min}_j \nu / \lambda_{j,j}}^{\infty} \log(1+\lambda_{j,j} x/\nu) f_j(x) dx,$}  & \text{if } |\mathcal{L}'_i|=0 \text{ and } \\
& \beta_{i,j,j} \text{ is unknown}.
\end{array}
\right.
\end{align}
where
\begin{align} \label{refmu}
&\mu_{\rho+n-1}= \notag \\
&\left\{
\begin{array}{ll}
\int_0^{\eta_{i,j}} \ln\left( 1+ \frac{\lambda_{j,j}\beta_{i,j,j}}{\nu+y}\right) & \multirow{2}{*}{$\text{if } \beta_{i,j,j} \text{ is known,}$} \\
y^{\rho+n-1} e^{ -y \theta_j^{\max} } dy, & \\
\int_{\frac{\xi^{\min}_j \nu}{\lambda_{j,j}}}^{\infty}\int_0^{\frac{\lambda_{j,j} x}{\xi^{\min}_{j}}-\nu} \ln\left( 1+ \frac{\lambda_{j,j}x}{\nu+y}\right) & \multirow{2}{*}{$\text{if } \beta_{i,j,j} \text{ is unknown.}$} \\
\times  y^{\rho+n-1} e^{ -y \theta_j^{\max} } f_j(x) dy dx,  &
\end{array}
\right.
\end{align}

\end{prop}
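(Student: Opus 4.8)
The plan is to reduce every expression to a single primitive: the distribution of the aggregate unknown interference $Y_{i,j}(\mathcal{L}'_i)=\sum_{z\in\mathcal{L}'_i}\lambda_{z,j}\beta_{i,z,j}$. Once its density is in hand, both $p^s_{i,j}$ and $r_{i,j}$ follow by integrating it against, respectively, the indicator of the non-outage event and the rate $\log(1+\xi_{i,j})$, with a further outer integration over the signal fading $\beta_{i,j,j}$ whenever the latter is unknown. The case split in \eqref{propnakagami} and \eqref{expectedrate} mirrors exactly two binary choices: whether any interferer is uncertain ($|\mathcal{L}'_i|>0$ versus $|\mathcal{L}'_i|=0$) and whether the signal gain $\beta_{i,j,j}$ is known or random. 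I would treat the generic case $|\mathcal{L}'_i|>0$ first and recover the two degenerate cases $|\mathcal{L}'_i|=0$ at the end by setting $Y_{i,j}\equiv 0$, so that $\nu$ absorbs all (now known) interference.

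The heart of the argument is the law of $Y_{i,j}$. By the Nakagami-$m$ assumption each normalized interference gain $\beta_{i,z,j}$ is Gamma distributed with shape $m_{z,j}$ and unit mean, so $\lambda_{z,j}\beta_{i,z,j}$ is Gamma with shape $m_{z,j}$ and scale proportional to $\lambda_{z,j}/m_{z,j}$; by independence across links, $Y_{i,j}$ is a sum of independent but non-identically distributed Gamma variables. Since the scales $\lambda_{z,j}/m_{z,j}$ differ, this sum is not itself Gamma, so I would invoke the single-series representation for sums of independent Gamma variables of \cite{sumofgamma}: its density is an infinite series of Gamma kernels sharing a common reference parameter $\theta^{\max}_j=\max_{z\in\mathcal{L}'_i}\lambda_{z,j}/m_{z,j}$ built from the interferers' scales, aggregate shape $\rho=\sum_z m_{z,j}$, the normalizing prefactor $\prod_z(\lambda_{z,j}/m_{z,j})^{m_{z,j}}$, and the recursively defined weights $\delta_n$ of \eqref{refdelta}. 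Fixing the reference parameter and establishing convergence of the $\delta_n$ series is the one genuinely delicate point, and I would relegate the recursion and its convergence to Appendix A.

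With the density of $Y_{i,j}$ fixed, the successful transmission probability is a CDF evaluation. From \eqref{SINRp}, the event $\xi_{i,j}\geqslant\xi^{\min}_j$ is equivalent to $Y_{i,j}\leqslant \lambda_{j,j}\beta_{i,j,j}/\xi^{\min}_j-\nu$. When $\beta_{i,j,j}$ is known the right-hand side is the deterministic threshold $\eta_{i,j}=\max(0,\lambda_{j,j}\beta_{i,j,j}/\xi^{\min}_j-\nu)$, and integrating the Gamma-series density from $0$ to $\eta_{i,j}$ turns each term into a lower incomplete gamma $\gamma(\rho+n,\eta_{i,j}\theta^{\max}_j)$, which is precisely $C_{i,j}$. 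When $\beta_{i,j,j}$ is random I would condition on $\beta_{i,j,j}=x$, reuse the same CDF with threshold $\lambda_{j,j}x/\xi^{\min}_j-\nu$, and average against $f_j$; requiring this threshold to be nonnegative fixes the lower limit $\xi^{\min}_j\nu/\lambda_{j,j}$ and yields the integral form of $C_{i,j}$. Setting $Y_{i,j}\equiv 0$ then collapses the first branch to the indicator $\mathbbm{1}[\lambda_{j,j}\beta_{i,j,j}/\nu\geqslant\xi^{\min}_j]$ and the second to the tail probability $\int_{\xi^{\min}_j\nu/\lambda_{j,j}}^{\infty}f_j(x)\,dx$.

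The expected rate follows the same template but with $\log(1+\xi_{i,j})$ in place of the indicator. Using the identity $\mathbb{E}[\log(1+\xi_{i,j})\,|\,\xi_{i,j}\geqslant\xi^{\min}_j]\,p^s_{i,j}=\mathbb{E}\!\left[\log(1+\xi_{i,j})\,\mathbbm{1}[\xi_{i,j}\geqslant\xi^{\min}_j]\right]$, I would condition on $Y_{i,j}=y$ so that $\xi_{i,j}=\lambda_{j,j}\beta_{i,j,j}/(\nu+y)$, observe that the outage constraint becomes $y\leqslant\eta_{i,j}$, and integrate the Gamma-series density against $\ln(1+\lambda_{j,j}\beta_{i,j,j}/(\nu+y))$; the per-term integrals are exactly the $\mu_{\rho+n-1}$ of \eqref{refmu}, while the factor $\log(e)$ converts the natural logarithm inside $\mu_{\rho+n-1}$ to the logarithm used in the rate. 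The unknown-signal sub-case adds the outer $f_j(x)\,dx$ integration with the same lower limit, and the $|\mathcal{L}'_i|=0$ cases again reduce by deterministic evaluation or a single integral against $f_j$. Beyond the sum-of-Gammas series, the only care required is bookkeeping of the integration limits and the order of conditioning, which is routine once the density of $Y_{i,j}$ and the equivalence between SINR thresholds and bounds on $Y_{i,j}$ are established.
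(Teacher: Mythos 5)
Your proposal is correct and takes essentially the same route as the paper's own proof: both reduce $p^s_{i,j}$ and $r_{i,j}$ to integrals against the density of $Y_{i,j}(\mathcal{L}'_i)$, invoke the single-series sum-of-Gammas representation of \cite{sumofgamma} (with prefactor $\prod_z (\lambda_{z,j}/m_{z,j})^{m_{z,j}}$, shape $\rho$, reference parameter $\theta^{\max}_j$, and weights $\delta_n$), then integrate term by term so that each term produces a lower incomplete gamma function for the success probability and the integral $\mu_{\rho+n-1}$ for the rate, with an outer average over $f_j$ when $\beta_{i,j,j}$ is unknown and the $|\mathcal{L}'_i|=0$ branches recovered as the degenerate case $Y_{i,j}\equiv 0$. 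The only cosmetic difference is that the paper cites specific entries of \cite{int} (e.g., Eqn. (3.381-1)) for the term-by-term integrals, which you instead recognize directly as incomplete gamma evaluations.
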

\begin{proof}
See Appendix A.
\end{proof}
Notice that a main difficulty to evaluate expressions of this result is the complicated integration in (\ref{refmu}). For the special case where all the fading parameters $m_{z,j}$ are integer numbers, $\mu_{\rho+n-1}$ in (\ref{refmu}) can be calculated more efficiently according to \emph{Lemma} \ref{expectmu}.
\begin{lemma} \label{expectmu}
Assuming the fading parameters $m_{z,j}$ are integer numbers, if the small scale fading gain of link $j$ is known at the BS, $\mu_k$ can be calculated as
\begin{align} \label{closedmu}
\mu_{0}=&D \left( \eta_{i,j},\nu+\lambda_{j,j} \beta_{i,j,j} \right)-D \left(\eta_{i,j},\nu \right), \notag \\
\mu_{k}=&\left(  \theta^{\max}_j \right)^{-1} \big[ E_k \left( \eta_{i,j}, \nu + \lambda_{j,j} \beta_{i,j,j} \right)- A_k \left( \eta_{i,j}, \lambda_{j,j} \beta_{i,j,j} \right) \notag \\
&- E_k \left( \eta_{i,j}, \nu \right) + k \mu_{k-1} \big], k=1,2,\cdots,
\end{align}
and if the small scale fading gain of link $j$ is unknown at the BS, $\mu_k$ can be calculated as
\begin{align}
\mu_{0}=&\int_{\xi^{\min}_j \nu / \lambda_{j,j}}^{\infty} \left[ D \left( \iota_{j},\nu+\lambda_{j,j} x \right)-D \left( \iota_{j} , \nu \right) \right] f_j(x) d x, \notag \\
\mu_{k}=&\left( \theta^{\max}_j \right)^{-1} \Bigg\{ \int_{\frac{\xi^{\min}_j \nu}{\lambda_{j,j}}}^{\infty} \big[ E_k \left(\iota_{j}, \nu+ \lambda_{j,j} x \right)- A_k \left( \iota_{j}, \lambda_{j,j} x \right) \notag \\
& - E_k \left( \iota_{j}, \nu \right) \big]  f_j(x) dx + k \mu_{k-1} \Bigg\}, k=1,2,\cdots,
\end{align}
where $\iota_{j}=\lambda_{j,j} x /\xi^{\min}_{j}- \nu$, and the closed-form expressions of $A_k(\cdot,\cdot)$, $D(\cdot,\cdot)$ and $E_k(\cdot,\cdot)$ are given in Appendix B.
\end{lemma}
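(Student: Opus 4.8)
The plan is to collapse the integrals in (\ref{refmu}) into a recursion in the integer index by a single integration by parts, and then to defer the evaluation of the residual integrals to the closed forms of Appendix B. I would first treat the case where $\beta_{i,j,j}$ is known, abbreviating $a=\lambda_{j,j}\beta_{i,j,j}$, $\theta=\theta^{\max}_j$, $\eta=\eta_{i,j}$, and $f(y)=\ln\bigl(1+\frac{a}{\nu+y}\bigr)$, so that $\mu_k=\int_0^{\eta} y^k f(y)\,e^{-\theta y}\,dy$ for the integer index $k=\rho+n-1\ge 0$. The integrality of $k$ is precisely what the assumption on the $m_{z,j}$ guarantees, since $\rho=\sum_{z\in\mathcal{L}'_i} m_{z,j}\ge 1$ and $n\ge 0$.

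For $k\ge 1$ I would integrate by parts taking $dv=e^{-\theta y}\,dy$, $v=-\theta^{-1}e^{-\theta y}$, and $u=y^k f(y)$. The boundary contribution at $y=0$ vanishes because $0^k=0$, leaving only the term $-\theta^{-1}\eta^k f(\eta)e^{-\theta\eta}$, which I identify with $-\theta^{-1}A_k(\eta,a)$. Differentiating $u$ produces two pieces: the factor $k\,y^{k-1}f(y)$ reproduces $\mu_{k-1}$ and hence the $+k\mu_{k-1}$ in the recursion, while the factor $y^k f'(y)$, using $f'(y)=\frac{1}{\nu+a+y}-\frac{1}{\nu+y}$, splits into two integrals of the form $\int_0^{\eta}\frac{y^k e^{-\theta y}}{c+y}\,dy$ with $c=\nu+a$ and $c=\nu$; these are exactly $E_k(\eta,\nu+a)$ and $E_k(\eta,\nu)$. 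Collecting the three contributions reproduces the stated expression for $\mu_k$. For the base case I would handle $\mu_0$ separately, since now the boundary term at $y=0$ no longer vanishes; writing $f(y)=\ln(\nu+a+y)-\ln(\nu+y)$ and setting $D(\eta,c)=\int_0^{\eta}\ln(c+y)\,e^{-\theta y}\,dy$ immediately gives $\mu_0=D(\eta,\nu+a)-D(\eta,\nu)$.

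The unknown-$\beta_{i,j,j}$ case then follows with no new ideas. The inner $y$-integral in (\ref{refmu}) is of exactly the same type with $a$ replaced by $\lambda_{j,j}x$ and $\eta$ replaced by $\iota_j=\lambda_{j,j}x/\xi^{\min}_j-\nu$, so I would apply the known-case identity for each fixed $x$ and then integrate against $f_j(x)$ over $[\xi^{\min}_j\nu/\lambda_{j,j},\infty)$, the recursive $+k\mu_{k-1}$ term now carrying the full nested integral.

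The conceptual step, namely the single integration by parts that simultaneously generates the $+k\mu_{k-1}$ recursion and the boundary term $A_k$, is routine. The real work, which I would relegate to Appendix B, is the closed-form evaluation of $E_k(\eta,c)$ and $D(\eta,c)$; this is where the integer assumption on the $m_{z,j}$ is essential, since for integer $k$ the rational factor $y^k/(c+y)$ admits a finite polynomial-plus-remainder expansion that reduces $E_k$ to lower incomplete gamma functions together with a single exponential-integral term, and the recursion then terminates cleanly at $\mu_0$. I would also note the degenerate case $\eta_{i,j}=0$, which occurs when $\lambda_{j,j}\beta_{i,j,j}/\xi^{\min}_j\le\nu$, where every $\mu_k$ vanishes and the formulas hold trivially.
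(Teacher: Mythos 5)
Your proof is correct and takes essentially the same approach as the paper's: both hinge on a single integration by parts of $\mu_k$ that produces the boundary term $A_k$, the recursive term $k\mu_{k-1}$, and the two $E_k$ integrals coming from the derivative of the logarithm, with $\mu_0$ decomposed as a difference of two $D$-integrals and the unknown-$\beta_{i,j,j}$ case handled by applying the known-case identity pointwise in $x$ and integrating against $f_j(x)$. The only divergence is a routine one in the closed-form reduction of $E_k(t_1,t_2)=\int_0^{t_1}\frac{y^k e^{-y\theta^{\max}_j}}{t_2+y}\,dy$: you propose polynomial long division of $y^k/(t_2+y)$, giving lower incomplete gamma functions plus a single exponential-integral term, whereas the paper splits $\int_0^{t_1}=\int_0^{\infty}-\int_{t_1}^{\infty}$, shifts and applies the binomial theorem, and invokes tabulated integrals (Gradshteyn--Ryzhik 3.383-9 and 3.383-10) to land on upper incomplete gamma functions; both routes yield valid closed forms, so this difference is immaterial.
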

\begin{proof}
See Appendix B.
\end{proof}
Note that the above results are for general fading scenarios, i.e., interference channels are Nakagami fading, and signal channels are with arbitrary fading. For the special case where all the channels are Rayleigh fading, i.e., Nakagami fading with parameter $1$, closed-form expressions can be obtained, as in \emph{Corollary} 1-2.

\begin{cor}
When the BS can acquire the small-scale fading gain of the communication link $j$, the successful transmission probability for link $j$ using channel $i$ is given as
\begin{align} \label{p1_pr}
p^s_{i,j}
= & 1-
\sum \limits_{z \in \mathcal{L}'_i}
\left( \prod \limits_{k \in \mathcal{L}'_i, k \ne z}
\left( \lambda_{z,j} - \lambda_{k,j}  \right)^{-1} \right)
\lambda_{z,j}^{|\mathcal{L}'_i|-1} \notag \\
 & \times e^{  -\frac{\lambda_{j,j}\beta_{i,j,j}-\xi^\text{min}_j \nu }
{\lambda_{z,j}\xi^\text{min}_j} }, \text{ if } \frac{\lambda_{j,j}\beta_{i,j,j}}{\nu}  \geqslant \xi^{\text{min}}_j,
\end{align}
and zero elsewhere. Furthermore, the expected rate is given as
\begin{align}
r_{i,j}= &- \sum \limits_{z \in \mathcal{L}'_i} \left( \prod \limits_{k \in \mathcal{L}'_i, k \ne z} \left( \lambda_{z,j} - \lambda_{k,j}  \right)^{-1} \right) \lambda_{z,j}^{|\mathcal{L}'_i|-1} \notag \\
& \times \left[ G^1_{z,j}(0) - G^1_{z,j} \left(\frac{\lambda_{j,j}\beta_{i,j,j}}{\lambda_{z,j}} \right) \right] +\log \left(1+\xi^\text{min}_j \right) p^s_{i,j} \notag \\
& + \log \left( \frac{v+\lambda_{j,j}\beta_{i,j,j}}{v+ v \xi^{\min}_j} \right), \text{ if }  \frac{\lambda_{j,j}\beta_{i,j,j}}{\nu }  \geqslant \xi^{\text{min}}_j,
\end{align}
and zero elsewhere, where $G^1_{z,j}(x)=\log(e) \exp \left( \frac{\nu}{\lambda_{z,j}} +x \right)$
$\times \left[ \text{\emph{Ei}} \left( -\frac{\lambda_{j,j}\beta_{i,j,j}}{\xi_j^{\text{min}} \lambda_{z,j}} -x \right)
- \text{\emph{Ei}} \left( -\frac{\nu}{\lambda_{z,j}} -x \right) \right]$.
\end{cor}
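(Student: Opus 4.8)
The plan is to specialize to $m_{z,j}=1$, where Rayleigh fading makes each unknown interference power $\lambda_{z,j}\beta_{i,z,j}$ an independent exponential variable with rate $1/\lambda_{z,j}$, so the aggregate unknown interference $Y_{i,j}=\sum_{z\in\mathcal{L}'_i}\lambda_{z,j}\beta_{i,z,j}$ is hypoexponentially distributed. Assuming the $\lambda_{z,j}$ are pairwise distinct (coincident values follow by continuity), a partial-fraction expansion of the Laplace transform $\prod_{z}(1+s\lambda_{z,j})^{-1}$ gives the finite-sum CDF $F_{Y_{i,j}}(y)=1-\sum_{z\in\mathcal{L}'_i}\lambda_{z,j}^{|\mathcal{L}'_i|-1}\left(\prod_{k\neq z}(\lambda_{z,j}-\lambda_{k,j})^{-1}\right)e^{-y/\lambda_{z,j}}$ and the matching density $f_{Y_{i,j}}$. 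This is exactly where the infinite series of \emph{Proposition} \ref{probability} collapses to a closed form.

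Since $\beta_{i,j,j}$ is known, the event $\xi_{i,j}\geq\xi^{\min}_j$ is equivalent to $Y_{i,j}\leq \lambda_{j,j}\beta_{i,j,j}/\xi^{\min}_j-\nu=\eta_{i,j}$. When $\lambda_{j,j}\beta_{i,j,j}/\nu<\xi^{\min}_j$ this threshold is negative and, as $Y_{i,j}\geq0$, the probability is zero; otherwise $p^s_{i,j}=F_{Y_{i,j}}(\eta_{i,j})$, and substituting $\eta_{i,j}/\lambda_{z,j}=(\lambda_{j,j}\beta_{i,j,j}-\xi^{\min}_j\nu)/(\lambda_{z,j}\xi^{\min}_j)$ reproduces the claimed probability.

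For the rate I would write $r_{i,j}=\int_0^{\eta_{i,j}}\log(1+\lambda_{j,j}\beta_{i,j,j}/(\nu+y))f_{Y_{i,j}}(y)\,dy$ and integrate by parts with $v=F_{Y_{i,j}}(y)-1$, so that the boundary stays finite. Differentiating the logarithm yields $\log(e)\,(-a)/[(\nu+y)(\nu+y+a)]$ with $a=\lambda_{j,j}\beta_{i,j,j}$; evaluating the boundary at $y=\eta_{i,j}$ (where the SINR equals $\xi^{\min}_j$, using $\nu+\eta_{i,j}=\lambda_{j,j}\beta_{i,j,j}/\xi^{\min}_j$) and at $y=0$ gives exactly the two logarithmic terms $\log(1+\xi^{\min}_j)p^s_{i,j}$ and $\log((\nu+\lambda_{j,j}\beta_{i,j,j})/(\nu+\nu\xi^{\min}_j))$. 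The leftover integral is split by the partial fraction $a/[(\nu+y)(\nu+y+a)]=1/(\nu+y)-1/(\nu+y+a)$, and each piece is an exponential-integral primitive $\int e^{-y/\lambda}/(c+y)\,dy=e^{c/\lambda}\,\text{Ei}(-(c+y)/\lambda)$; taking $c=\nu$ and $c=\nu+\lambda_{j,j}\beta_{i,j,j}$ over $[0,\eta_{i,j}]$ assembles precisely $G^1_{z,j}(0)-G^1_{z,j}(\lambda_{j,j}\beta_{i,j,j}/\lambda_{z,j})$, completing the formula.

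The main obstacle is the final bookkeeping: one must check that the prefactors $e^{\nu/\lambda_{z,j}}$ and $e^{(\nu+a)/\lambda_{z,j}}$ together with the $\text{Ei}$ arguments at $y=0$ and $y=\eta_{i,j}$ fold exactly into the two evaluations of $G^1_{z,j}$. A secondary technical point is justifying the pairwise-distinctness of $\{\lambda_{z,j}\}$ required for the partial-fraction form, which is handled by a limiting argument for coincident rates.
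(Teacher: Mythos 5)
Your proposal is correct and follows essentially the same route as the paper's Appendix C: both exploit the hypoexponential distribution of $Y_{i,j}(\mathcal{L}'_i)$ under Rayleigh fading (the paper cites a reference for the finite-sum pdf where you derive it via Laplace-transform partial fractions), obtain $p^s_{i,j}$ as the CDF evaluated at $\eta_{i,j}$, and recover the rate by integration by parts together with the exponential-integral identity (Eqn.\ 3.352-1 of Gradshteyn--Ryzhik), which assembles the $G^1_{z,j}$ terms exactly as you describe. Your explicit continuity argument for coincident $\lambda_{z,j}$ and the choice $v=F_{Y_{i,j}}-1$ in the integration by parts are only minor bookkeeping refinements of the same computation.
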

\begin{proof}
See Appendix C.
\end{proof}
\begin{cor}
When the BS cannot acquire the small-scale fading gain of communication link $j$, the successful transmission probability and expected rate for link $j$ using channel $i$ is given as
\begin{align} \label{p2_pr}
&p^s_{i,j}=
\exp{\left(- {\xi^\text{min}_j \nu/\lambda_{j,j}}\right)} \notag \\
&\times \left[1-
\sum \limits_{z \in \mathcal{L}'_i}
\left( \prod \limits_{k \in \mathcal{L}'_i, k \ne z} \left( \lambda_{z,j} - \lambda_{k,j}  \right)^{-1} \right)
\frac{\lambda_{z,j}^{|\mathcal{L}'_i|} \xi^\text{min}_j}{\lambda_{j,j} + \lambda_{z,j}\xi^\text{min}_j} \right].
\end{align}
Furthermore, the expected rate is given as
\begin{align}
r_{i,j}=& \log \left(e\right)
\sum \limits_{z \in \mathcal{L}'_i}
\left( \prod \limits_{k \in \mathcal{L}'_i, k \ne z}
\left( \lambda_{z,j} - \lambda_{k,j}  \right)^{-1} \right)
\frac{\lambda_{z,j}^{|\mathcal{L}'_i|-1}}{\lambda_{j,j}-\lambda_{z,j}} \notag \\
& \times \left[ G^2_{z,j}(\lambda_{z,j})-G^2_{z,j}(\lambda_{j,j}) \right] +\log \left(1+\xi^\text{min}_j \right) p^s_{i,j} \notag \\
&-\log \left(e\right)\exp \left( \nu/\lambda_{j,j} \right)  \text{\emph{Ei}} \left[ -{\nu \left(1+ \xi^\text{min}_j \right) }/{\lambda_{j,j}} \right],
\end{align}
where $G^2_{z,j}(x)=\frac{\lambda_{j,j} \lambda_{z,j}}{x} e^{-\nu/x} \text{\emph{Ei}}[-\nu/x-\nu \xi^\text{min}_j/ \lambda_{j,j}]$.
\end{cor}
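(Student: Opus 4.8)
The observation that organizes the whole proof is that the present corollary is just the conditional result of \emph{Corollary} 1, averaged over the now-random signal gain. When the BS does not know $\beta_{i,j,j}$, under Rayleigh fading it is a unit-mean exponential variable independent of the interference gains, so by the tower property
\[
p^s_{i,j}=\int_0^\infty p^{s,\mathrm{known}}_{i,j}(\beta)\,e^{-\beta}\,d\beta,\qquad r_{i,j}=\int_0^\infty r^{\mathrm{known}}_{i,j}(\beta)\,e^{-\beta}\,d\beta,
\]
where $p^{s,\mathrm{known}}_{i,j}(\beta)$ and $r^{\mathrm{known}}_{i,j}(\beta)$ are the expressions of \emph{Corollary} 1 evaluated at $\beta_{i,j,j}=\beta$. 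This route reuses the interference statistics already embedded in \emph{Corollary} 1 rather than re-deriving them. Since those expressions vanish unless $\lambda_{j,j}\beta/\nu\ge\xi^{\text{min}}_j$, the integrals run over $\beta\ge\beta_\star:=\xi^{\text{min}}_j\nu/\lambda_{j,j}$, and the whole computation collapses to one-dimensional integrals of elementary and exponential-integral functions against $e^{-\beta}$.

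For the probability I would substitute (\ref{p1_pr}), i.e. $p^{s,\mathrm{known}}_{i,j}(\beta)=1-\sum_{z\in\mathcal{L}'_i}c_z\exp(-(\lambda_{j,j}\beta-\xi^{\text{min}}_j\nu)/(\lambda_{z,j}\xi^{\text{min}}_j))$ with $c_z=\lambda_{z,j}^{|\mathcal{L}'_i|-1}/\prod_{k\ne z}(\lambda_{z,j}-\lambda_{k,j})$, and integrate over $[\beta_\star,\infty)$. The constant contributes $\int_{\beta_\star}^\infty e^{-\beta}d\beta=e^{-\xi^{\text{min}}_j\nu/\lambda_{j,j}}$, while each exponential summand integrates to $c_z\,e^{-\xi^{\text{min}}_j\nu/\lambda_{j,j}}\,\lambda_{z,j}\xi^{\text{min}}_j/(\lambda_{j,j}+\lambda_{z,j}\xi^{\text{min}}_j)$; using $c_z\lambda_{z,j}=\lambda_{z,j}^{|\mathcal{L}'_i|}/\prod_{k\ne z}(\lambda_{z,j}-\lambda_{k,j})$ reproduces (\ref{p2_pr}) exactly. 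This half is mechanical.

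The rate splits according to the three pieces of $r^{\mathrm{known}}_{i,j}(\beta)$ in \emph{Corollary} 1. The term $\log(1+\xi^{\text{min}}_j)\,p^{s,\mathrm{known}}_{i,j}(\beta)$ averages, by the identity just used, to $\log(1+\xi^{\text{min}}_j)\,p^s_{i,j}$, giving the middle term of the claim. The term $\log((\nu+\lambda_{j,j}\beta)/(\nu+\nu\xi^{\text{min}}_j))$ vanishes at $\beta=\beta_\star$, so integrating it against $e^{-\beta}$ by parts kills the boundary and leaves $\log(e)\int_{\beta_\star}^\infty e^{-\beta}/(\nu/\lambda_{j,j}+\beta)\,d\beta$; the standard evaluation $\int_a^\infty e^{-\beta}/(b+\beta)\,d\beta=-e^{b}\,\mathrm{Ei}(-(a+b))$ turns this into precisely the isolated term $-\log(e)\,e^{\nu/\lambda_{j,j}}\,\mathrm{Ei}(-\nu(1+\xi^{\text{min}}_j)/\lambda_{j,j})$. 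It remains to average the $G^1_{z,j}$ contributions, and this is where the closed-form functions $G^2_{z,j}$ are produced.

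The step I expect to be the main obstacle is exactly this last averaging. Each $G^1_{z,j}$ carries exponential-integral factors $\mathrm{Ei}(-\lambda_{j,j}\beta/(\xi^{\text{min}}_j\lambda_{z,j})-x)$ whose argument is affine in $\beta$, so $\int_{\beta_\star}^\infty G^1_{z,j}(\cdot)\,e^{-\beta}\,d\beta$ is an integral of $\mathrm{Ei}$ against an exponential. I would handle it by integration by parts---differentiating the $\mathrm{Ei}$ factor to an elementary rational--exponential integrand and re-expressing the result through $\mathrm{Ei}$---and then matching the two endpoint evaluations to the two arguments $\lambda_{z,j}$ and $\lambda_{j,j}$ of $G^2_{z,j}$. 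Keeping the several $\mathrm{Ei}$ terms and their prefactors organized so that they condense into the compact difference $G^2_{z,j}(\lambda_{z,j})-G^2_{z,j}(\lambda_{j,j})$, rather than a sprawling residue sum, is the delicate part; the probability and the other two rate pieces are routine by comparison.
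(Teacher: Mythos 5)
Your proposal is correct, but it takes a genuinely different route from the paper's Appendix D. The paper never invokes Corollary 1: it substitutes the exponential-mixture pdf of the interference sum (\ref{epdfce}) directly into the double-integral definitions (\ref{proborigin}) and (\ref{rateorigin}), writes the rate with the signal variable $x$ as the \emph{inner} integral, and then applies the change of variable $t=\lambda_{j,j}x/\big(\xi^{\min}_j(\nu+y)\big)$ followed by partial integration and Eqn.\ (3.352-2) of Gradshteyn--Ryzhik, treating $|\mathcal{L}'_i|=0$ separately at the end. You integrate in the opposite order: you condition on $\beta_{i,j,j}$ (valid by the assumed independence of the small-scale gains), reuse Corollary 1 as the already-evaluated inner (interference) integral, and then average against the unit-mean exponential density of $\beta_{i,j,j}$. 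Your route buys reuse of an established result plus an explicit consistency check between the two corollaries, and the parts you carry out are right: the probability computation reproduces (\ref{p2_pr}) exactly, the term $\log(1+\xi^{\min}_j)\,p^s_{i,j}$ follows immediately from the tower property, and your boundary-vanishing integration by parts together with $\int_a^\infty e^{-\beta}/(b+\beta)\,d\beta=-e^{b}\,\mathrm{Ei}(-(a+b))$ gives the isolated $-\log(e)\,e^{\nu/\lambda_{j,j}}\,\mathrm{Ei}\!\left[-\nu(1+\xi^{\min}_j)/\lambda_{j,j}\right]$ term. The paper's route buys simpler bookkeeping in the hard part: its change of variables keeps every integrand elementary, whereas you must integrate products of exponentials with $\mathrm{Ei}$ functions. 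That last step, which you only sketch, does go through: integrating each $\mathrm{Ei}$ factor of $G^1_{z,j}$ against $e^{-\beta}$ by parts produces the prefactor $\big(1-\lambda_{j,j}/\lambda_{z,j}\big)^{-1}$, which is precisely the origin of the $(\lambda_{j,j}-\lambda_{z,j})^{-1}$ in the claim; the spurious boundary values $\mathrm{Ei}\!\left(-\nu(1+\xi^{\min}_j)/\lambda_{z,j}\right)$ contributed by the two $\mathrm{Ei}$ factors cancel in pairs; and the surviving terms carry exactly the arguments $-\nu/\lambda_{z,j}-\nu\xi^{\min}_j/\lambda_{j,j}$ and $-\nu(1+\xi^{\min}_j)/\lambda_{j,j}$ of $G^2_{z,j}(\lambda_{z,j})$ and $G^2_{z,j}(\lambda_{j,j})$. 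So your plan is sound and would yield the stated closed form, but a complete proof requires carrying that final bookkeeping through rather than asserting it.
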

\begin{proof}
See Appendix D.
\end{proof}

\begin{table*}[!t]
  \renewcommand{\arraystretch}{1.3}
  \caption{Four Scenarios with partial CSI}
  \label{tab_4case}
  \centering
  \begin{tabular}{|c|c|c|c|c|}
  \hline
  &\textbf{Scenario 1}&\textbf{Scenario 2}&\textbf{Scenario 3}&\textbf{Scenario 4}\\
  \hline
  CSI of the cellular communication links & known & known & known & known \\
  \hline
  CSI of the D2D communication links& known & unknown & known & known \\
  \hline
  CSI of the interference links between user devices& unknown & unknown & unknown & unknown\\
  \hline
  CSI of the interference links from the BS to the D2D receivers & known & known & unknown & unknown\\
  \hline
  CSI of the interference links from the D2D transmitters to the BS & known & known & known & unknown\\
  \hline
  \end{tabular}
  \begin{tablenotes}

\item General: 'known' means that the BS can acquire the CSI of this kind of link, and 'unknown' means the BS cannot acquire the instantaneous small-scale fading gains of this kind of link. \end{tablenotes}
\end{table*}

\subsection{Full CSI Scenario}
The problem formulated in (\ref{problem}) can also be applied to the full CSI scenario with the assumption that the BS can acquire the CSI of all links, by setting the successful transmission probability as
\begin{equation}
\text{Pr} \left\{ \xi_{i,j}\left( \mathcal{L}_i \right)  \geqslant \xi^{\text{min}}_{j} \right\}=
\begin{cases}
1 & \text{if } \xi_{i,j} \left( \mathcal{L}_i \right)  \geqslant \xi^{\text{min}}_j.\\
0 & \text{otherwise}.
\end{cases}
\end{equation}
In this simplified case, we can investigate different utility functions, with the following two as examples.
\subsubsection{Weighted Sum-Rate Maximization}
In the full CSI scenario, the expected weighted sum-rate given in (\ref{weighted sr}) can be simplified as the weighted sum-rate given as
\begin{equation} \label{weighted sr}
 U_i \left( \mathcal{L}_i \right)=\sum\limits_{j \in \mathcal{L}_i} w_j \log \left( 1+\xi_{i,j} \left( \mathcal{L}_i \right) \right).
\end{equation}
\subsubsection{Access Rate Maximization}
Most previous works have focused on maximizing the network throughput \cite{limitedarea,Dis,ratep}. However, for different kinds of applications, the users may have different requirements, resulting in different network metrics. For instance, D2D communications can be utilized in machine-to-machine (M2M) communications \cite{survey}, where a large number of low-rate devices need to be supported \cite{M2M5G}. In such scenarios, the \emph{access rate}, which is defined as the ratio of the number of active links to the total links, is more relevant than the network throughput. To maximize the access rate, the utility function of channel $i$ is given as
\begin{equation} \label{weighted sr}
 U_i \left( \mathcal{L}_i \right)=\frac{1}{N}\sum\limits_{j \in \mathcal{L}_i} \mathbbm{1} \left[\xi_{i,j}\left( \mathcal{L}_i \right) \geqslant \xi^{\text{min}}_{j} \right],
\end{equation}
where
\begin{equation} \label{1}
 \mathbbm{1} \left[\xi_{i,j}\left( \mathcal{L}_i \right) \geqslant \xi^{\text{min}}_{j} \right]=
\begin{cases}
1 & \text{if } \xi_{i,j}\left( \mathcal{L}_i \right) \geqslant \xi^{\text{min}}_{j},\\
0 & \text{otherwise}.
\end{cases}
\end{equation}
The access rate maximization problem is equivalent to maximizing the number of users that can be served simultaneously, which has been used as one criterion for \emph{user admission} \cite{userad,S-CRAN}.

Both of these performance metrics can be handled by the algorithms proposed in the following sections, and their performance will be demonstrated in Section VI.

\section{Optimal channel assignment}
The problem formulated in (\ref{problem}) is a mixed integer nonlinear programming (MINLP) problem, which is NP-hard. In this section, we will propose a DP algorithm to find the optimal channel assignment, with a much lower complexity than exhaustive search.

\subsection{Optimal DP Channel Assignment Algorithm}
DP is an efficient technique to find the global optimal solution without requiring differentiability. Thus, it can be applied to deal with non-continuous solution spaces, e.g., integer variables \cite{dpint,dp2}.
In the DP algorithm, the original problem is divided into multiple \emph{stages}, and is solved stage by stage. Each stage is associated with multiple \emph{states}. Meanwhile, there is a recursive relationship connecting the optimal solution for a particular state at one stage and the optimal solutions for previous stages. In other words, the optimal solution at one stage can be constructed by the optima at previous stages. Thus, as long as the optima of the beginning stages are obtained, the optimal solution of the original problem can be constructed according to the recursive relationship. The key design steps in the DP algorithm involve dividing the problem into multiple stages and developing the recursive relationships among the stages, both of which need to be tailored for each specific problem. In the following, we will first identify how to divide the problem formulated in (\ref{problem}) into multiple stages and identify the associated states, and then find the recursive relationship. A simple example is shown in Fig. \ref{DPsample} to illustrate the main idea.

\begin{figure*}[!t]
  \centering
  \includegraphics[width=6.5in]{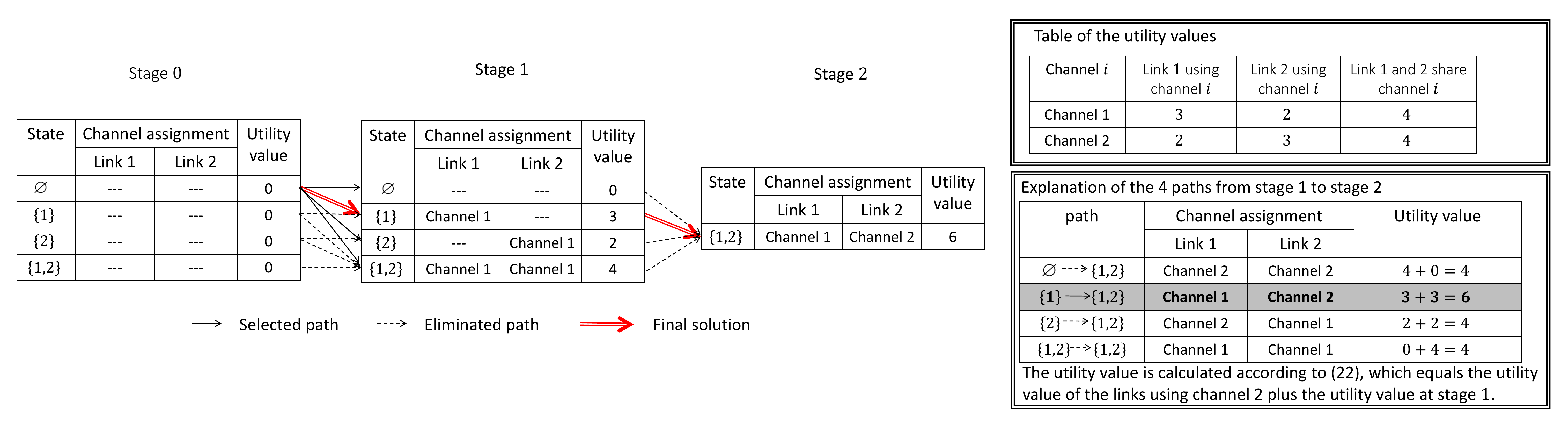}
  \caption{An illustration of the proposed DP algorithm when $M=2$ and $N=2$. In this example, there are three stages, i.e., stages $0$, $1$, and $2$. Stages $0$ and $1$ are with $4$ states, i.e., $\varnothing$, $\{1\}$, $\{2\}$, and $\{1,2\}$, while stage 3 only has one state, i.e., $\{1,2\}$. Stage $0$ is the beginning stage, where all channels have not been assigned. Then, at the $k$-th stage, the $k$-th channel will be assigned by solving problem (\ref{problem_x}). Specifically, in the final solution, the first transition is from state $\varnothing$ to state $\{1\}$, with channel $1$ assigned to link $1$, and the second transition is from state $\{1\}$ to state $\{1,2\}$, with channel $2$ assigned to link $2$.}
  \label{DPsample}
\end{figure*}
\subsubsection{Stages and States}
In our proposed DP algorithm, each stage corresponds to a particular channel, and thus the stage by stage processing of the algorithm is to assign channels one by one. Meanwhile, the states represent different subsets of $\mathcal{S}$, and each state contains a particular group of links. As shown in Fig. \ref{DPsample}, the final result of the algorithm is a particular path selected from the state transition diagram.

To specify the algorithm, we first define a few notations. Denote $\mathcal{CH}^{k}  \triangleq \{i|1 \leqslant i \leqslant k \}$. At the $k$-th stage, $0 \leqslant k \leqslant M$, for a particular set of links $\mathcal{J} \subseteq \mathcal{S}$, we need to find the optimal channel assignment for links in $\mathcal{J}$ sharing the $k$ channels in $\mathcal{CH}^{k}$. Based on the recursive relationship, as will be specified in Section IV-A-2), each state at the $k$-th stage can be transferred from the states at the $(k-1)$-th stage, while the one with the maximum utility value is selected and others are eliminated. Thus, the problem can be solved stage by stage, and the complexity can be significantly reduced thanks to the recursion among consecutive stages. Finally, at the last stage, the optimal channel assignment of the original problem (\ref{problem}) can be found.

Next, we will find the utility value at the $k$-th stage associated with the state $\mathcal{J}$, denoted by $OPT_{k,\mathcal{J}}$, and $OPT_{M, \mathcal{S}}$ will give the optimal utility value of the original problem. Finding $OPT_{k,\mathcal{J}}$ is to maximize the utility of links in $\mathcal{J}$ sharing $k$ channels in $\mathcal{CH}^{k}$ while guaranteeing the constraints (\ref{problem}a)-(\ref{problem}f), which is equivalent to solving the following optimization problem:
\begin{align}
\mathop {\max } \limits_{\mathcal{L}_i \in \mathcal{J}, i \in \mathcal{CH}^k} & \sum \limits_{i \in \mathcal{CH}^k} U_i \left( \mathcal{L}_i \right)= U_k(\mathcal{L}_k) + \sum \limits_{i \in \mathcal{CH}^{k-1}} U_i \left( \mathcal{L}_i \right), \label{problem_s} \\
\text{s.t. } & \text{Pr} \left\{ \xi_{i,j}\left( \mathcal{L}_i \right) \geqslant \xi^{\text{min}}_{j} \right\}  \geqslant \psi^{\text{min}}_{j}, \notag \\
& \forall i \in \mathcal{CH}^k \text{ and } \forall j \in \mathcal{L}_i, \tag{\ref{problem_s}a} \\
& \left| \mathcal{C} \cap \mathcal{L}_i \right| \leqslant 1, \forall i \in \mathcal{CH}^k, \tag{\ref{problem_s}b} \\
& \sum \limits_{i \in \mathcal{CH}^k} |\mathcal{L}_i \cap \mathcal{C}| = |\mathcal{C} \cap \mathcal{J}|, \tag{\ref{problem_s}c} \\
& \mathcal{L}_{u} \cap \mathcal{L}_{v} = \varnothing, \notag \\
& \forall u \ne v \text{ and } u \in \mathcal{CH}^k \text{ and } v \in \mathcal{CH}^k, \tag{\ref{problem_s}d} \\
& \mathcal{L}_i \cap \mathcal{C}_u= \varnothing, \forall i \in \mathcal{CH}_d \cap \mathcal{CH}^k, \tag{\ref{problem_s}e} \\
& \mathcal{L}_i \cap \mathcal{C}_d= \varnothing, \forall i \in \mathcal{CH}_u \cap \mathcal{CH}^k, \tag{\ref{problem_s}f}
\end{align}
where $|\cdot |$ denotes the cardinality, $\mathcal{L}_i$ denotes the set of links using channel $i$, constraint (\ref{problem_s}a) guarantees the QoS requirements, constraint (\ref{problem_s}b) implies each channel cannot be accessed by more than one cellular link, constraints (\ref{problem_s}c) and (\ref{problem_s}d) represent the constraints (\ref{problem}c) and (\ref{problem}d), and constraints (\ref{problem_s}e) and (\ref{problem_s}f) means that uplink cellular links can only access the uplink channel, so as for the downlink cellular links. Denote the constraint set of this problem as $CONS_{k,\mathcal{J}}$, and the optimal channel assignment of link $j \in \mathcal{S}$ is denoted as $\left(\rho_{k,\mathcal{J}} \right)_{i,j}$, which is defined similarly to (\ref{rho}). This problem is computationally difficult, and in the following we will develop recursive relationship to find the optimal solution.

\subsubsection{Recursive Relationship}
At the $0$-th stage, no channel is assigned yet. Hence, the optimum value $OPT_{0, \mathcal{J}}$, $\mathcal{J} \subseteq \mathcal{S}$, equals $0$. To find the recursive relationship, we need to transfer the problem of finding the optimum value $OPT_{k,\mathcal{J}}$ to the problem of finding the optimum values at the $(k-1)$-th stage.

Suppose that $\mathcal{L}_k$ is fixed, and then the channels in $\mathcal{CH}^{k-1}$ can only be assigned to the links in $\left( \mathcal{J}-\mathcal{L}_k \right)$ considering the constraint (\ref{problem_s}d). In the following, we will show that solving the problem at the $k$-th stage associated with state $\mathcal{J}$ can be transferred to solving the problem at the $(k-1)$-th stage associated with state $(\mathcal{J}-\mathcal{L}_k)$. Firstly, the constraints should keep consistent. Specifically, for the sets $\mathcal{L}_{i}$, $1 \leqslant i \leqslant k-1$, the constraint set $CONS_{k,\mathcal{J}}$ should be equivalent to $CONS_{k-1,\mathcal{J}-\mathcal{L}_k}$. It is obvious for constraints (\ref{problem_s}a), (\ref{problem_s}b), and (\ref{problem_s}d)-(\ref{problem_s}f), since they are irrelevant of the state $\mathcal{J}$. For constraint (\ref{problem_s}c), it can be rewritten as
$\sum \limits_{i \in \mathcal{CH}^k} |\mathcal{L}_i \cap \mathcal{C}| = |\mathcal{C} \cap \mathcal{J}| \Rightarrow \sum \limits_{i \in \mathcal{CH}^{k-1}} |\mathcal{L}_i \cap \mathcal{C}| + |\mathcal{L}_k \cap \mathcal{C}| = |\mathcal{C} \cap (\mathcal{J}-\mathcal{L}_k)| + |\mathcal{L}_k \cap \mathcal{C}| \Rightarrow \sum \limits_{i \in \mathcal{CH}^{k-1}} |\mathcal{L}_i \cap \mathcal{C}| = |\mathcal{C} \cap (\mathcal{J}-\mathcal{L}_k)|.
$ Secondly, the objective value $OPT_{k,\mathcal{J}}$ with fixed $\mathcal{L}_k$, denoted as $OPT_{k,\mathcal{J}}^{\mathcal{L}_k}$, can be calculated based on $OPT_{k-1,\mathcal{J}-\mathcal{L}_k}$, given as
\begin{align}
OPT_{k,\mathcal{J}}^{\mathcal{L}_k}&=\mathop {\max } \limits_{\mathcal{L}_i, i \in \mathcal{CH}^{k-1}} \bigg[ U_k(\mathcal{L}_k) + \sum \limits_{i \in \mathcal{CH}^{k-1}} U_i \left( \mathcal{L}_i \right) \bigg] \notag \\
& = U_k(\mathcal{L}_k) + \mathop {\max } \limits_{\mathcal{L}_i, i \in \mathcal{CH}^{k-1}} \sum \limits_{i \in \mathcal{CH}^{k-1}} U_i \left( \mathcal{L}_i \right) \notag \\
&=  U_k \left( \mathcal{L}_k \right) + OPT_{k-1, \mathcal{J} - \mathcal{L}_k}
\end{align}

After searching for all the feasible selections for the set $\mathcal{L}_k$, in which all the links can share channel $k$ simultaneously, the optimum value $OPT_{k,\mathcal{J}}$, $1 \leqslant k \leqslant M$, can be found according to the following recursive relationship:
\begin{align}
OPT_{k,\mathcal{J}} = \mathop {\max } \limits_{\mathcal{L}_k } & \left[ U_k \left( \mathcal{L}_k \right) + OPT_{k-1, \mathcal{J} - \mathcal{L}_k} \right] , \label{problem_x}\\
\text{s.t. } & \text{Pr} \left\{ \xi_{k,j}\left( \mathcal{L}_k \right) \geqslant \xi^{\text{min}}_{j} \right\}  \geqslant \psi^{\text{min}}_{j}, \forall j \in \mathcal{L}_k, \tag{\ref{problem_x}a} \\
& \left| \mathcal{C} \cap \mathcal{L}_k \right| \leqslant 1, \tag{\ref{problem_x}b} 
\end{align}
\begin{align}
& \left| \mathcal{C} \cap \mathcal{L}_k \right| = 1,  \text{if } k \in \mathcal{CH}_x \notag \\
& \text{ and } \left| \mathcal{CH}_x \cap \mathcal{CH}^k \right| \leqslant \left| \mathcal{C}_x \cap \mathcal{J} \right|, x \in \{u,d\}, \tag{\ref{problem_x}c} \\
& \mathcal{L}_k \cap \mathcal{C}_x = \varnothing, \text{if } k \notin \mathcal{CH}_x, x \in \{u,d\}, \tag{\ref{problem_x}d}
\end{align}
where constraints (\ref{problem_x}a)-(\ref{problem_x}c) ensure constraints (\ref{problem_s}a)-(\ref{problem_s}c), respectively, and constraint (\ref{problem_x}d) implies constraints (\ref{problem_s}e) and (\ref{problem_s}f).

\subsubsection{Optimal Channel Assignment}
Denote $\mathcal{L}_k^\star = \mathop {\arg\max } \limits_{\mathcal{L}_k} \left[ U_k \left( \mathcal{L}_k \right) + OPT_{k-1, \mathcal{J} - \mathcal{L}_k} \right]$, and then the optimal channel assignment can be identified as follows. The links in $\mathcal{L}_k^\star$ will use channel $k$, while the links using channel $i$, for $1 \leqslant i < k$, are the same as those at the $(k-1)$-th stage associated with state $(\mathcal{J}-\mathcal{L}_k^\star)$. Thus, the optimal channel assignment at the $k$-th stage, $1 \leqslant k \leqslant M$, associated with state $\mathcal{J} \subseteq \mathcal{S}$ is specified by
\begin{equation} \label{rho_cdp}
\left(\rho_{k,\mathcal{J}} \right)_{i,j}=\left\{
                \begin{array}{ll}
                \left(\rho_{k-1,\mathcal{J}-\mathcal{L}_k^\star} \right)_{i,j} & \text{if } i < k,\\
                1& \text{if } i = k \text{ and link } j \in \mathcal{L}^\star_k, \\
                0& \text{if } i = k \text{ and link } j \notin \mathcal{L}_k^\star, \\
                \text{undecided} & \text{if } i > k,
                \end{array}
                \right.
\end{equation}
where $i \in \mathcal{CH}, j \in \mathcal{S}$.
No channel is assigned at the $0$-th stage, and when $k >0 $, the optimal solution at the $k$-th stage can be found based on the optimal solution at the $(k-1)$-th stage.
Therefore, we can find the optimal solution at the $M$-th stage by recursion, which is also optimal for the original channel assignment problem (\ref{problem}). To sum up, the DP algorithm is described in Algorithm \ref{alg:dp}.

The proposed DP algorithm provides a relatively efficient approach, compared to exhaustive search, to get the optimal solution. Finding the optimal solution is important for two reasons in this paper: 1) it provides a performance upper bound for evaluating sub-optimal algorithms; 2) it is needed to compare different CSI scenarios to reveal the relative importance of the CSI of different links, for which sub-optimal algorithms may lead to illusive conclusions.
\begin{algorithm}[!t]
\caption{The optimal DP algorithm}
\label{alg:dp}
\begin{algorithmic}[1]
\STATE $OPT_{M,\mathcal{S}}, \rho_{M,\mathcal{S}}$: the optimal utility value and  channel assignment for the original problem
\FORALL{$\mathcal{J} \subseteq \mathcal{S}$}
\STATE {Set $OPT_{0,\mathcal{J}}=0$.}
\ENDFOR
\FOR{$k=1$ to $M-1$}
\FORALL{$\mathcal{J} \subseteq \mathcal{S}$}
\STATE {Set $OPT_{k,\mathcal{J}}=0$.}
\FORALL{feasible $\mathcal{L}_k$ that satisfies constraints in (\ref{problem_x})}
\IF{$OPT_{k,\mathcal{J}} <  U_k \left( \mathcal{L}_k \right) + OPT_{k-1, \mathcal{J} - \mathcal{L}_k}$}
\STATE Set $OPT_{k,\mathcal{J}} =  U_k \left( \mathcal{L}_k \right) + OPT_{k-1, \mathcal{J} - \mathcal{L}_k}$ and $\mathcal{L}_k^\star = \mathcal{L}_k$.
\ENDIF
\ENDFOR
\STATE Update $\left(\rho_{k,\mathcal{J}} \right)_{i,j}$, $i \in \mathcal{CH}$, $j \in \mathcal{S}$, according to (\ref{rho_cdp}).
\ENDFOR
\ENDFOR
\STATE {Set $k=M$,$\mathcal{J}=\mathcal{S}$, and Run Steps 7-13.}
\end{algorithmic}
\end{algorithm}

\subsection{Complexity Analysis}
Based on Algorithm \ref{alg:dp}, the time complexity to obtain $OPT_{k,\mathcal{J}}$ is given as
\begin{equation} \label{tc_dp1}
\begin{aligned}
& \mathcal{T}_{state} \left( k,\mathcal{J} \right) = \mathcal{O} \left( \left| \mathcal{S}^p_{k,\mathcal{J}} \right| C \right) \\
&=\left\{ 
\begin{array}{ll}
\mathcal{O} \left( (\left| \mathcal{C}_u \cap \mathcal{J} \right|+1) \cdot 2^{\left| \mathcal{D} \cap \mathcal{J} \right|} C \right)& \text{if } R_k \in \mathcal{R}_u, \\
\mathcal{O} \left( (\left| \mathcal{C}_d \cap \mathcal{J} \right|+1) \cdot 2^{\left| \mathcal{D} \cap \mathcal{J} \right|} C \right)& \text{if } R_k \in \mathcal{R}_d,
\end{array}
\right.
\end{aligned}
\end{equation}
where $C$ is the time complexity to check constraint (\ref{problem_x}a) and to calculate the utility function $U_k \left( \mathcal{L}_k \right)$.
The time complexity of the $k$-th stage when $k<M$ can then be derived as
\begin{equation} \label{tc_dp2}
\mathcal{T}_{stage}  \left( k \right)
=\left\{
\begin{array}{ll}
\mathcal{O} \left((N_{uc}/2+1) 2^{N_{uc}} 3^{N_d} C \right)& \text{if } R_k \in \mathcal{R}_u, \\
\mathcal{O} \left((N_{dc}/2+1) 2^{N_{dc}} 3^{N_d} C \right)& \text{if } R_k \in \mathcal{R}_d.
\end{array}
\right.
\end{equation}
For the $M$-th stage, we only need to consider the state $\mathcal{S}$. Thus, the time complexity of the $M$-th stage can be expressed as
\begin{equation} \label{tc_dp3}
\mathcal{T}_{stage}  \left( M \right)
=\left\{
\begin{array}{ll}
\mathcal{O} \left( (N_{uc}+1) \cdot 2^{N_d} C \right)& \text{if } M_d=0,\\
\mathcal{O} \left( (N_{dc}+1) \cdot 2^{N_d} C \right)& \text{if } M_d>0.
\end{array}
\right.
\end{equation}
Based on (\ref{tc_dp2}) and (\ref{tc_dp3}), and by ignoring the constant coefficients and lower order terms, the upper bound for the time complexity of our proposed DP algorithm can be derived as
\begin{equation} \label{tc_dpall}
\begin{aligned}
&\mathcal{T}_{optimal} = \\
&
\begin{cases}
\mathcal{O} \Big( M 2^{N_d} C \Big) & \text{if } M \leqslant 1,\\
\mathcal{O} \Big( \left(M_{u} N_{uc} 2^{N_{uc}} + M_{d} N_{dc} 2^{N_{dc}}\right)  3^{N_d} C \Big) & \text{otherwise}.
\end{cases}
\end{aligned}
\end{equation}
By comparing the time complexity of the exhaustive search method, which is
\begin{equation} \label{tc_ex}
\mathcal{T}_{search} = \mathcal{O} \left( \frac{M_{u}!}{\left( M_u-N_{uc} \right)!} \cdot \frac{M_{d}!}{\left( M_d-N_{dc} \right)!} (M+1)^{N_d} C \right),
\end{equation}
we can find that our proposed optimal algorithm performs similarly to exhaustive search when the number of channels $M \leqslant 2$ and performs much more efficiently when $M>2$. Although the proposed algorithm has an exponential worst-case complexity, it can serve as a performance benchmark to evaluate other algorithms. One limitation of the DP algorithm is that it requires a large memory space, which is $\mathcal{O} \Big( M 2^{\max \left( N_{uc}, N_{dc}\right)} 2^{N_d} \left(N_c+N_d\right)\Big)$.

\section{cluster-based sub-optimal channel assignment}
\begin{figure}[!t]
  \centering
  \includegraphics[width=3.5in]{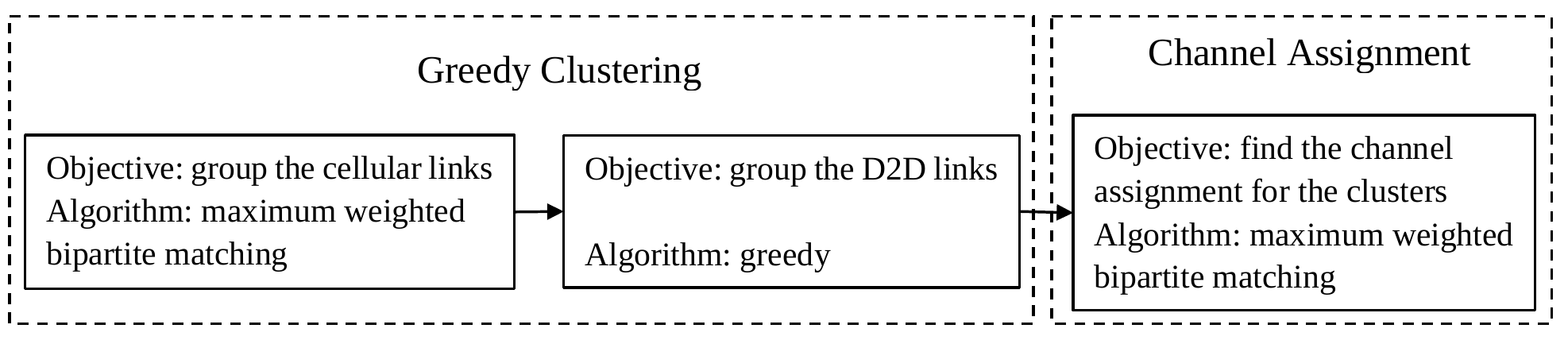}
  \caption{Cluster-based sub-optimal channel assignment algorithm.}
  \label{sub-optimal}
\end{figure}
In this section, a low-complexity sub-optimal algorithm is proposed, which includes two steps. As shown in Fig. \ref{sub-optimal}, the first step is to divide all the links into $M$ non-overlapping clusters, denoted as $\mathcal{GC} = \{\mathcal{G}_1,\mathcal{G}_2,...,\mathcal{G}_M\}$, and the second step is to assign the channels to the $M$ clusters. Note that, in the first step, we will temporarily assume that cluster $g$ uses channel $g$, where $1 \leqslant g \leqslant M$. Since such channel assignment may not be optimal, we will then reassign channels to achieve a higher network utility.
\subsection{Greedy Clustering Algorithm}
Since links in the same cluster will share the same channel, we should put those that will generate relatively low mutual interference to the same cluster. Conventional clustering algorithms \cite{kmean,cluster,moretti2013efficient} are inapplicable to our problem, as there are two types of links and the cellular links have a higher priority than D2D links. Furthermore, the utility function is highly complex. Thus, we will propose different methods to cluster cellular and D2D links, respectively. In particular, a bipartite matching algorithm is proposed to cluster the cellular links, which also helps to check the feasibility of the original channel assignment problem. To cluster D2D links, we propose a greedy algorithm, for which different priority values are defined for different utility functions based on the key features of the considered system. The overall greedy clustering algorithm is shown in Algorithm \ref{alg:cluster}, where $\mathcal{U}=\mathcal{S} - \bigcup_{g=1}^{M} \mathcal{G}_g$ denotes the set of links that have not been put to any cluster, and $\mathcal{Q}_g$, $1 < g \leqslant M$, is a First-In-First-Out (FIFO) queue representing the sequence of links adding to cluster $\mathcal{G}_g$, where the link being put first will be processed first. The FIFO queue will also be important for the channel assignment in the second step. In the following, we will explain how the greedy clustering algorithm works.

\subsubsection{Cellular Link Clustering}
We first build a bipartite graph by regarding the $M$ clusters and $N_c$ cellular links as two groups of vertexes. Then, we will define the weights of the edges. Considering the constraints (\ref{problem}a), (\ref{problem}e) and (\ref{problem}f), while putting a cellular link $j \in \mathcal{C}$ into cluster $\mathcal{G}_g \in \mathcal{GC}$, the following conditions should be satisfied:
\begin{enumerate}[a)]
  \item ${\frac{p^t_j h_{g,j} }{\sigma^2 }} \geqslant \xi^{\text{min}}_{j}$,\label{cu_a}
  \item If $j \in \mathcal{C}_u$, $g$ can only be selected from $[1,M_u]$; otherwise, $g$ is selected from $(M_u,M]$, \label{cu_b}
\end{enumerate}
where condition \ref{cu_a}) guarantees the QoS constraints, and condition \ref{cu_b}) addresses constraints (\ref{problem}e) and (\ref{problem}f). Thus, the weight of the edge between cluster $\mathcal{G}_g$ and cellular link $j$ is defined as
\begin{equation}
T^c_{g,j}=
\begin{cases}
r_{g,j}& \text{if conditions \ref{cu_a}) and \ref{cu_b}) are satisfied}, \\
-\infty & \text{ otherwise},
\end{cases}
\end{equation}
where $r_{g,j}$ is different for different utility functions. For instance, we set $r_{g,j} = a_j \log \left( 1 + \frac{ p^t_j h_{g,j} }{\sigma^2 } \right)$ for the weighted sum-rate, and $r_{g,j}=\log \left( 1 + \frac{ p^t_j h_{g,j} }{\sigma^2 } \right)$ for the access rate.

According to constraints (\ref{problem}b) and (\ref{problem}c), different cellular links should be assigned to different clusters. Thus, cellular links are put into distinct clusters by maximizing the weighted sum-rate, which is to solve the following maximum weighted bipartite matching problem
\begin{align}
\mathop {\max }\limits_{\beta^c_{i,j} } &{ {\sum\limits_{g: \mathcal{G}_g \in \mathcal{GC},j \in \mathcal{C}} {\beta^c_{g,j}T^c_{g,j}  } } }, \label{problem_c} \\
\text{s.t. } &\sum\limits_{j \in \mathcal{C}} {\beta^c _{g,j} } \leqslant 1, \beta^c_{g,j} \in \{0,1\}, \forall g: \mathcal{G}_g \in \mathcal{GC},\tag{\ref{problem_c}a} \\
&\sum\limits_{g: \mathcal{G}_g \in \mathcal{GC}} {\beta^c _{g,j} } = 1, \beta^c_{g,j} \in \{0,1\}, \forall j \in \mathcal{C},\tag{\ref{problem_c}b}
\end{align}
where $\beta^c _{g,j}=1$ indicates that cellular link $j$ is put into cluster $\mathcal{G}_g$. This problem can be solved efficiently by the Kuhn-Munkres (KM) algorithm \cite{KM}. Moreover, such an algorithm can also check the feasibility of the problem formulated in (\ref{problem}). If the optimal solution of problem (\ref{problem_c}) is non-negative, each cellular link can access one channel and meet the QoS requirement. Accordingly, a feasible channel assignment for problem (\ref{problem}) is found. Otherwise, the QoS requirements for the cellular links cannot be satisfied simultaneously.
\subsubsection{D2D Link Clustering}
After clustering the cellular links, we need to cluster the D2D links. The proposed greedy algorithm is shown in lines 9 to 14 in Algorithm \ref{alg:cluster}. For this greedy algorithm to work, we will first define a priority value for each cluster-link pair $(g,j)$, denoted as $y_{g,j}$. A higher priority value $y_{g,j}$ implies that link $j$ has a higher priority to be put into cluster $\mathcal{G}_g$. In each greedy step, a cluster-link pair $\left(g^\star,j^\star\right)$ with the highest priority value is selected, and the corresponding D2D link $j^\star$ will be put into cluster $\mathcal{G}_{g^\star}$. Thus, the priority values should be chosen prudently for different utility functions. In the following, we will show some examples of priority values.
\begin{enumerate}[a)]
\item{\textbf{Weighted Sum-Rate Maximization}:} With the weighted sum-rate as the objective, we first define the utility gain of putting D2D link $j$ into cluster $\mathcal{G}_{g}$ as $v_{g,j}=U_g \left( \mathcal{G}_{g} \cup \{j\} \right)-U_g \left( \mathcal{G}_{g} \right)$. Intuitively, a larger utility gain will result in a higher priority value. Thus, we define the priority value of a D2D link $j \in \mathcal{U}$ being put into cluster $\mathcal{G}_{g}$ as
\begin{align} \label{operator}
&y_{g,j}= \notag \\
&\left\{
\begin{array}{ll}
\multirow{2}{*}{$v_{g,j}$}
& \text{if } \text{Pr} \left\{ \xi_{g,z}\left( \mathcal{G}_{g} \cup \{j\} \right) \geqslant \xi^{\text{min}}_{z} \right\}  \geqslant \psi^{\text{min}}_{z}, \\
 & \forall z \in \mathcal{G}_{g} \cup \{j\}, \\
\multirow{2}{*}{$v_{g,j}$} & \text{if } \prod \limits_{z \in \mathcal{G}_{g} \cup \{j'\}}  \mathbbm{1} \Big[ \text{Pr} \left\{ \xi_{g,z}\left( \mathcal{G}_{g} \cup \{j'\} \right) \geqslant \xi^{\text{min}}_{z} \right\} \\
&  \geqslant \psi^{\text{min}}_{z}  \Big] =0, \forall j' \in \mathcal{U} \text{ and } 1 \leqslant g \leqslant M ,\\
-\infty & \text{otherwise}.
\end{array}
\right.
\end{align}
The first condition means QoS constraint (\ref{problem}a) is satisfied if link $j$ is put into cluster $\mathcal{G}_g$, and thus the priority value is just its utility gain. The second condition means that for every link in $\mathcal{U}$, it will violate QoS constraint (\ref{problem}a) if being put into any cluster. For this special case, the priority value is also set as the utility gain, which is to make sure every link will be clustered. For the third condition, there exist some links in $\mathcal{U}$ that will violate the QoS constraint if being clustered, and thus such links will get the lowest priority values, i.e., $-\infty$, to allow other good links to be clustered.
\item{\textbf{Access Rate Maximization}:} With this utility, the definition of the priority value should consider two aspects. The first aspect is the minimum ratio between the link rate of the currently considered link and the minimum link rate requirement among all the links in a cluster. A larger minimum ratio results in a higher priority value. The other aspect is the quantity $f^l_{j} \triangleq \sum \limits_{1 \leqslant g \leqslant M} \prod \limits_{z \in \mathcal{G}_{g} \cup \{j\}} \mathbbm{1} \left[ \xi_{g,z}\left( \mathcal{G}_{g} \cup \{j\}\right) \geqslant \xi^{\text{min}}_{z}  \right]$, which denotes the number of clusters included in $\{\mathcal{G}_{g} \cup \{j\} | 1 \leqslant g \leqslant M\}$, in which all links satisfy their QoS requirements. A larger $f^l_{j}$ means that more clusters are able to serve link $j$, which means that there are more choices for link $j$, and thus, we set a lower priority value.
Thus, we define the priority value $y_{g,j}$ to maximize the access rate as
\begin{equation} \label{operator}
y_{g,j}=
\begin{cases}
\min \limits_{ z \in \mathcal{G}_{g} \cup \{j\}} \frac{\log \left[ 1+ \xi_{g,z}\left( \mathcal{G}_{g} \cup \{j\}\right) \right]2^{-f^l_{j}}}{\log \left( 1 + \xi^{\text{min}}_{z} \right)} & \text{if } f^l_{j}>0 ,\\
\min \limits_{ z \in \mathcal{G}_{g} \cup \{j\}} \frac{ \log \left[ 1+ \xi_{g,z}\left( \mathcal{G}_{g} \cup \{j\}\right) \right]2^{-M}}{\log \left( 1 + \xi^{\text{min}}_{z} \right)} & \text{otherwise}.
\end{cases}
\end{equation}
\end{enumerate}

\begin{algorithm}[!t]
\caption{Greedy clustering algorithm}
\label{alg:cluster}
\begin{algorithmic}[1]
\STATE {$\mathcal{U}$: the set of links that have not been put to any cluster}
\STATE {$\mathcal{Q}_g$: a queue representing the sequence of links adding to cluster $\mathcal{G}_g$}
\STATE {$y_{g,j}$: the priority value of putting link $j$ to cluster $\mathcal{G}_g$}
\STATE {Set $\mathcal{G}_g=\varnothing$, $1 \leqslant g \leqslant M$.}
\STATE {Put the cellular links into the $M$ clusters by solving problem (\ref{problem_c}) using the KM algorithm and update $\mathcal{G}_g$, $1 \leqslant g \leqslant M$.}
\FOR {$g=1$ to $M$}
\STATE {Add cellular link in $\mathcal{G}_g$ to the queue $\mathcal{Q}_g$.}
\ENDFOR
\STATE {Update the priority values $y_{g,j}$, $1 \leqslant g \leqslant M$, $j \in \mathcal{U}$.}
\WHILE{$\mathcal{U} \ne \varnothing$}
\STATE {Select $(g^\star,j^\star)=\arg\max \limits_{g: \mathcal{G}_g \in \mathcal{GC},j \in \mathcal{U}} y_{g,j}$.}
\STATE {Put D2D link $j^\star$ into cluster $\mathcal{G}_{g^\star}$: $\mathcal{G}_{g^\star}=\mathcal{G}_{g^\star} \cup \{j^\star\}$, and add D2D link $j^\star$ to queue $\mathcal{Q}_{g^\star}$.}
\STATE {Update the priority values $y_{g^\star,j}$, $j \in \mathcal{U}$.}
\ENDWHILE
\end{algorithmic}
\end{algorithm}

\subsection{Sub-Optimal Channel Assignment Algorithm}
In the first step, we have divided all the cellular and D2D links into $M$ non-overlapping clusters assuming that cluster $\mathcal{G}_g$ will be assigned with channel $g$. However, such channel assignment may not be optimal. Thus, in the second step, we will consider improving the channel assignment for clusters based on maximum weighted bipartite matching. In particular, a bipartite graph is built by regarding the $M$ channels and $M$ clusters as two groups of vertexes. We will first define the weights of the edges, and then solve the maximum weighted bipartite matching problem to get the channel assignment. The main difficulty in this step is to define the weights, where a low-complexity sub-optimal algorithm, i.e., Algorithm \ref{alg:t}, is proposed, built upon a unique queue structure developed during the clustering step.

Firstly, we define the weight of the edge between channel $i$ and cluster $g$ as the maximum utility value of the links in cluster $\mathcal{G}_g$ using channel $i$, denoted as $T^w_{i,g}$. To get the weight value $T^w_{i,g}$, a set of links $\mathcal{SG}_{i,g} \subseteq \mathcal{G}_g$ should be selected to share channel $i$, which maximizes the utility value. Then, the problem can be formulated as
\begin{align}
T^w_{i,g}=\mathop {\max }\limits_{ \mathcal{SG}_{i,g} }& \quad U_{i}(\mathcal{SG}_{i,g}),  \label{problem_d} \\
\text{s.t. }&\text{Pr} \left\{ \xi_{i,j}\left( \mathcal{SG}_{i,g} \right) \geqslant \xi^{\text{min}}_{j} \right\}  \geqslant \psi^{\text{min}}_{j}, \forall j \in \mathcal{SG}_{i,g},\tag{\ref{problem_d}a}\\
&\mathcal{SG}_{i,g} \subseteq \mathcal{G}_g,\tag{\ref{problem_d}b} \\
&\mathcal{SG}_{i,g} \cap \mathcal{C} = \mathcal{G}_g \cap \mathcal{C},\tag{\ref{problem_d}c} \\
&j \notin \mathcal{SG}_{i,g}, \text{ if } i \in \mathcal{CH}_x \text{ and } j \notin \mathcal{C}_x, x \in \{u,d\}, \tag{\ref{problem_d}d}
\end{align}
where constraint (\ref{problem_d}a) implies the QoS requirements, constraint (\ref{problem_d}c) guarantees the priority of cellular links, and constraints (\ref{problem_d}d) represents the constraints (\ref{problem}e) and (\ref{problem}f) of the original problem. If the problem is infeasible, we set $T^w_{i,g}$ as $-\infty$. The problem formulated in (\ref{problem_d}) is a mixed integer programming problem, which is NP hard. Thus, we propose a sub-optimal queueing-based algorithm, as shown in Algorithm \ref{alg:t}, which utilizes the FIFO queues developed in the clustering algorithm, i.e., $\mathcal{Q}_g$, to significantly reduce the complexity. In Algorithm \ref{alg:t}, a cellular link in $\mathcal{G}_g$ is first added to $\mathcal{SG}_{i,g}$. Then, the D2D links in $\mathcal{G}_g$, which can guarantee the QoS requirements (\ref{problem_d}a), are added to $\mathcal{SG}_{i,g}$ according to the sequence in queue $\mathcal{Q}_g$, and a sequence of sets $\mathcal{SG}^{(t)}_{i,g}$, $0 \leqslant t \leqslant \left| \mathcal{G}_g \right|$, is generated. Finally, the optimal $\mathcal{SG^\star}_{i,g}$ is selected among the sets $\mathcal{SG}^{(t)}_{i,g}$, $0 \leqslant t \leqslant \left| \mathcal{G}_g \right|$.
\begin{algorithm}[!t]
\caption{Queuing-based algorithm to obtain $T^w_{i,g}$.}
\label{alg:t}
\begin{algorithmic}[1]
\STATE{$\mathcal{SG^\star}_{i,g}$: the optimal set of links selected from $\mathcal{G}_g$ sharing channel $i$}
\STATE{Set $t=0$ and $\mathcal{SG}^{(0)}_{i,g} = \mathcal{G}_g \cap \mathcal{C}$.}
\IF {$\mathcal{SG}^{(0)}_{i,g}$ satisfies the constraints (\ref{problem_d}a), (\ref{problem_d}d) and (\ref{problem_d}e).}
\FOR {Pick one D2D link $j$ from queue $\mathcal{Q}_g$ in sequence.}
\IF {$\mathcal{SG}^{(t)}_{i,g} \cup \{j\}$ satisfies constraint (\ref{problem_d}a). }
\STATE{Set $t=t+1$ and $\mathcal{SG}^{(t)}_{i,g} = \mathcal{SG}^{(t-1)}_{i,g}  \cup \{j\}$.}
\ENDIF
\ENDFOR
\STATE{Select $b^\star = \arg\max \limits_{b \in \{0,1...,t\}}U_{i} \left( \mathcal{SG}^{(b)}_{i,g} \right)$.}
\STATE{Set $\mathcal{SG^\star}_{i,g}=\mathcal{SG}^{(b^\star)}_{i,g} $.}
\STATE{Calculate $T^w_{i,g}= U_{i} \left( \mathcal{SG^\star}_{i,g} \right)$.}
\ELSE
\STATE{Set $T^w_{i,g}=-\infty$.}
\ENDIF
\end{algorithmic}
\end{algorithm}

Secondly, a bipartite graph is built by regarding the $M$ channels and $M$ clusters as two groups of vertexes and $T^w_{i,g}$ as the weight of the edge between channel $i$ and cluster $g$. In order to maximize the overall utility value, we formulate the channel assignment problem for clusters as the maximum weighted bipartite matching problem \cite{kim2006use}, given by
\begin{align}
\mathop {\max }\limits_{\beta^w_{i,g} } &{ {\sum\limits_{i \in \mathcal{CH},g: \mathcal{G}_g \in \mathcal{GC}} {\beta^w_{i,g}T^w_{i,g}  } }, }\label{problem_sub} \\
\text{s.t. }&\sum\limits_{i \in \mathcal{CH}} {\beta^w _{i,g} } \leqslant 1, \beta^w_{i,g} \in \{0,1\}, \forall g: \mathcal{G}_g \in \mathcal{GC},\tag{\ref{problem_sub}a} \\
&\sum\limits_{g: \mathcal{G}_g \in \mathcal{GC}} {\beta^w _{i,g} } \leqslant 1, \beta^w_{i,g} \in \{0,1\}, \forall i \in \mathcal{CH},\tag{\ref{problem_sub}b}
\end{align}
where $\beta^{w} _{i,g}=1$ denotes that links in cluster $g$ access channel $i$. The optimal solution of the problem formulated in (\ref{problem_sub}), denoted as $\beta^{w\star} _{i,g}$, can be found by the KM algorithm.

Finally, when $\beta^{w\star} _{i,g}=1$, active links in cluster $\mathcal{G}_g$ will access channel $i$. Thus, the sub-optimal channel assignment for both cellular and D2D links is given by
\begin{equation}
\rho_{i,j}=\left\{
                \begin{array}{ll}
                1 & \text{ if } \beta^{w\star} _{i,g}=1 \text{ and } j \in \mathcal{SG}^\star_{i,g}, g: j \in \mathcal{G}_{g},\\
                0& \text{otherwise},
                \end{array}
                \right.
\end{equation}
where $i \in \mathcal{CH}$ and $j \in \mathcal{S}$.

\subsection{Complexity Analysis}
According to Algorithm \ref{alg:t}, different results of the clustering algorithm will lead to different time complexities. Therefore, we only show the time complexity of the worst case as $\mathcal{T}_{suboptimal}=\mathcal{O} \left( M^3 + N_d^2C +MN_dC \right)$, where $C$ is the time complexity to check the successful transmission probability requirements and to calculate the utility function, and the constant coefficients and lower order terms are ignored.

\section{simulation results}

\begin{table}[!t]
  \renewcommand{\arraystretch}{1.3}
  \caption{Simulation Parameters}
  \label{tab1}
  \centering
  \begin{tabular}{|c|c|}
  \hline
  \textbf{Parameter}&\textbf{Value} \\
  \hline
  \hline
  Cell radius $R$ &0.5 km\\
  \hline
  D2D group radius $r$ & $60$ m \\
  \hline
  Pathloss model for cellular link & $128.1+37.6 \log (d [ \text{km}])$\\
  \hline
  Pathloss model for D2D links& $148+40 \log (d [\text{km}])$ \\
  \hline
  Noise Power $\sigma_N^2$& $-114$ dBm\\
  \hline
  Uplink cellular user & \multirow{2}{*}{ $24$ dBm} \\
  transmit power $p^t_j, j \in \mathcal{C}_u$& \\
  \hline
  D2D link transmit power $p^t_j, j \in \mathcal{D}$& $24$ dBm\\
  \hline
  BS transmit power $P_B$ & $46$ dBm\\
  \hline
  Minimum SINR requirement $\xi^{\text{min}}_{j}$ & $0$ dB \\
  \hline
  \multirow{2}{*}{Multipath fading}&Rayleigh fading \\
  & with unit variance\\
  \hline
  \multirow{2}{*}{Shadowing}&Log-normal distribution \\
  & with standard deviation of $8$ dB \\
  \hline
  Minimum success transmit & \multirow{2}{*}{ $99\%$}\\
  probability requirement $\psi^{\text{min}}_{j}$ & \\
  \hline
  \end{tabular}
\end{table}

In this section, we will provide simulation results to evaluate the proposed algorithms, as well as provide design insights for D2D communications. We will first compare different algorithms in the full CSI case, and then different partial CSI scenarios will be compared. A single cell scenario is considered, where both the uplink and downlink cellular users are uniformly distributed in the cell. We adopt the group distribution model in \cite{D2Dmodel} for D2D links, where each D2D transmitter and its associated D2D receiver are uniformly distributed in a randomly located group within the cell. We assume that different D2D pairs are distributed in different groups and the locations of different groups are independent. Equal power allocation is adopted in the downlink for cellular users. The default simulation parameters are summarized in Table \ref{tab1}. In addition, we set weights of all links as $1$.

\subsection{Weighted Sum-Rate Maximization with Full CSI}

\begin{figure}[t]
\centering
\includegraphics[width=3in]{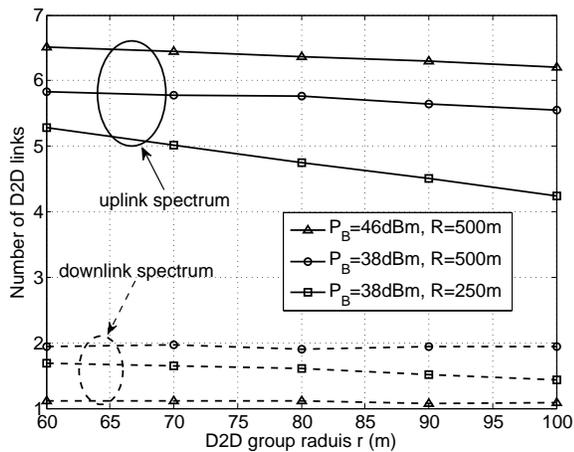}
\caption{Comparison of the number of D2D links sharing the uplink spectrum and downlink spectrum with $M_u=N_{uc}=4$, $M_d=N_{dc}=4$ and $N_d=8$. The optimal DP algorithm is applied.}
\label{uplink}
\end{figure}

\begin{figure}[t]
\centering
\includegraphics[width=3in]{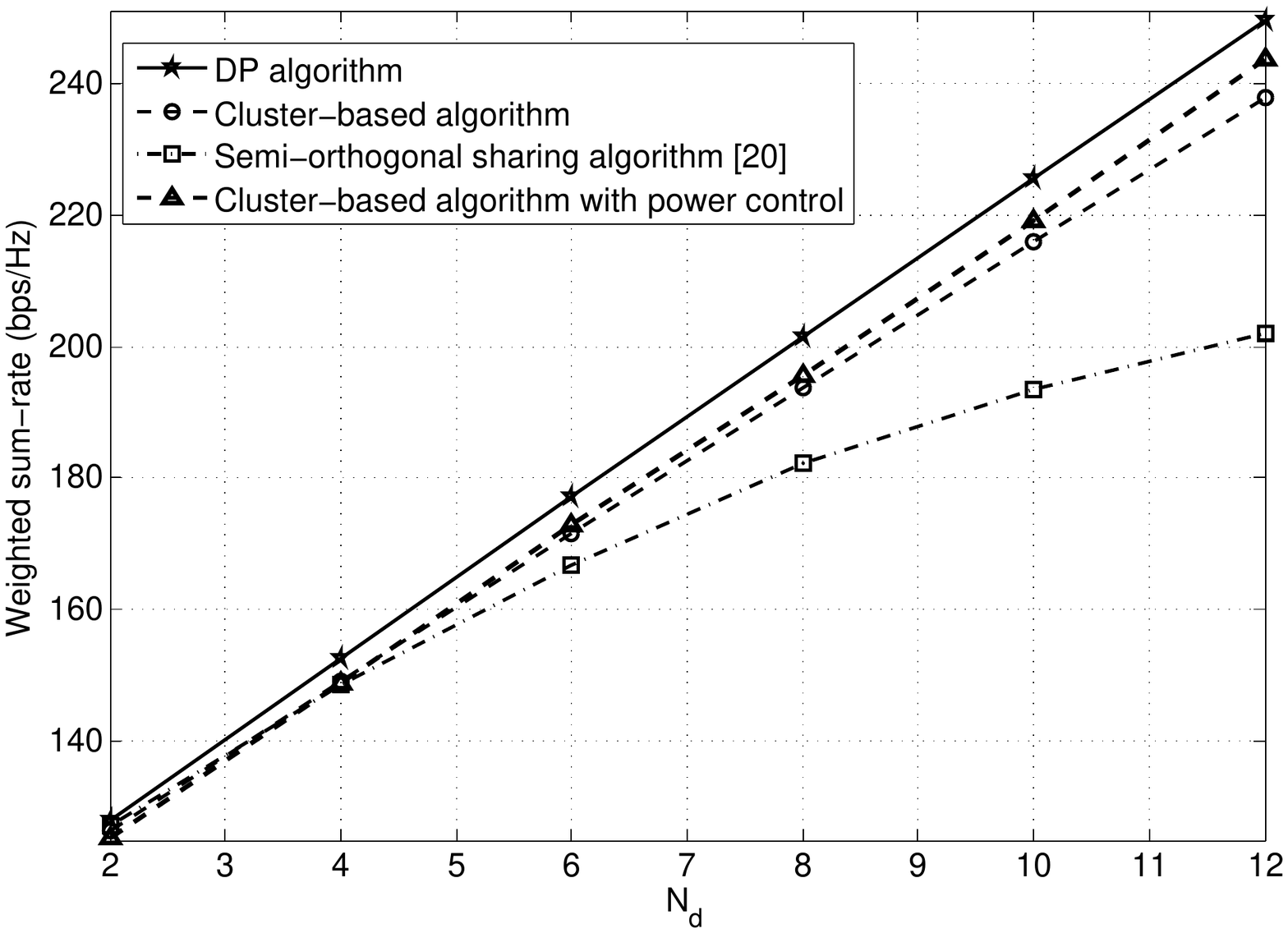}
\caption{Comparison of the weighted sum-rate for different algorithms with $M_u=N_{uc}=4$ and $M_d=N_{dc}=4$.}
\label{fig_scheme}
\end{figure}

\begin{figure}[t]
\centering
\includegraphics[width=3in]{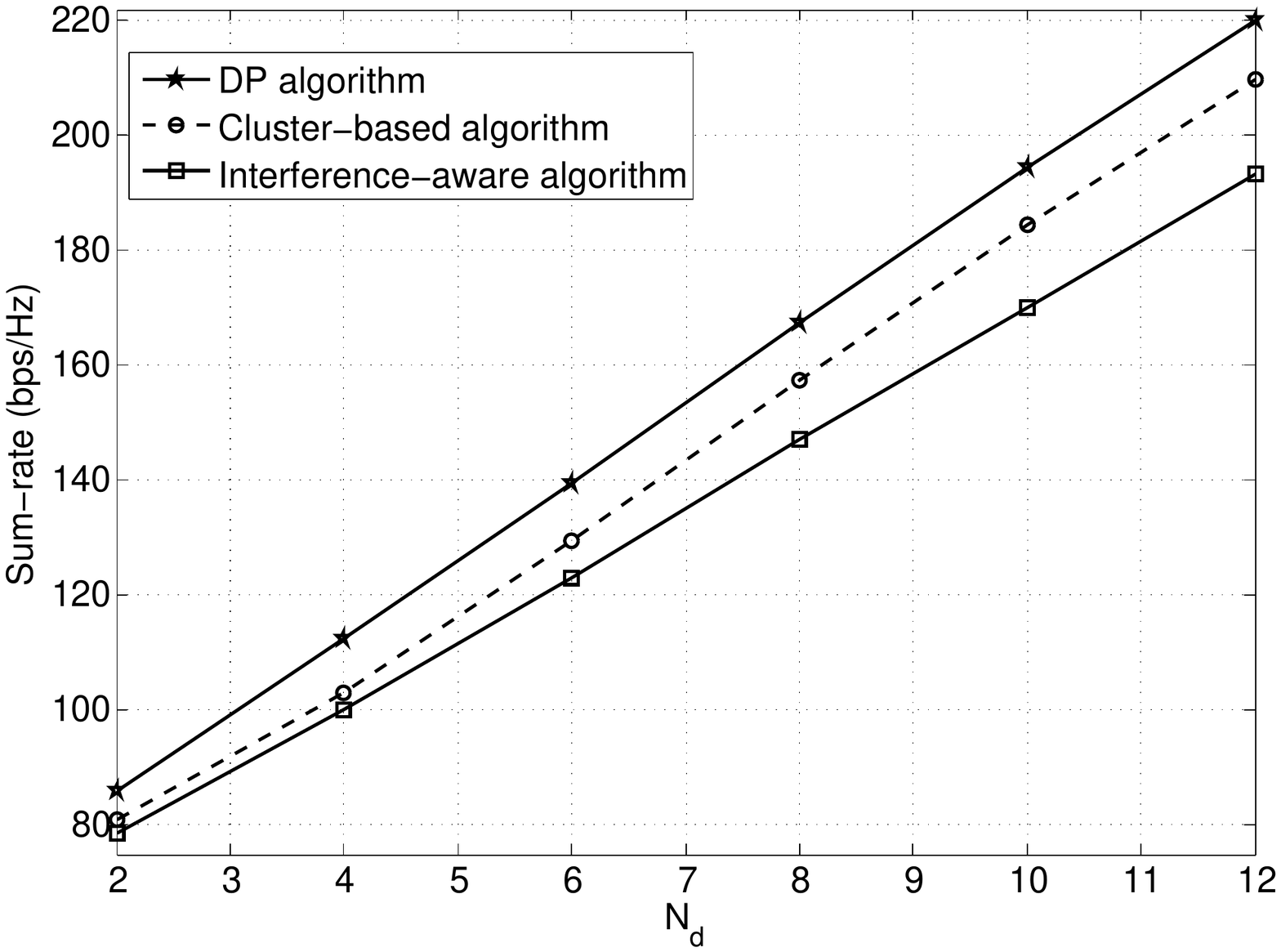}
\caption{Comparison of the weighted sum-rate for different algorithms with $M_u=N_{uc}=0$ and $M_d=N_{dc}=4$.}
\label{fig_compare}
\end{figure}

\begin{figure}[t]
\centering
\includegraphics[width=3in]{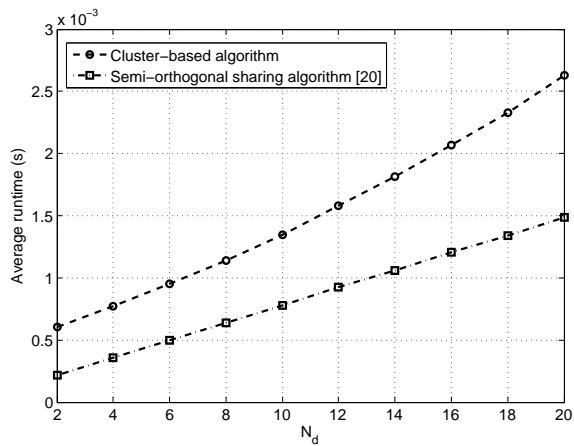}
\caption{Comparison of the average runtime for different algorithms with $M_u=N_{uc}=4$ and $M_d=N_{dc}=4$.}
\label{time}
\end{figure}

We will first investigate whether the uplink or downlink spectrum is accessed more frequently by the D2D links. Fig. \ref{uplink} compares the number of D2D links sharing the uplink and downlink spectrums, and it is shown that more D2D links will share the uplink spectrum. Interestingly, when we lower the BS transmit power, the number of  D2D links sharing the uplink spectrum decreases, while that sharing the downlink spectrum increases. Intuitively, the BS is always much more powerful than the devices. In the downlink spectrum, the BS may generate harmful interference to the D2D links sharing the same spectrum. Thus, it is better for the D2D links to share the uplink spectrum in most cases.

The advantage of allowing multiple D2D links to share the same channel is demonstrated in Fig. \ref{fig_scheme} by comparing with the previous study in \cite{one2one}, which allowed at most one D2D link to share the same channel and will be referred as the \emph{semi-orthogonal sharing algorithm}. In Fig. \ref{fig_scheme}, we find that the weighted sum-rate gain between our proposed sub-optimal algorithm and the semi-orthogonal sharing algorithm increases with the number of D2D links. This means that when the number of D2D links in the cell becomes larger, the proposed algorithm can significantly improve the performance by allowing multiple D2D links to share the same channel. We also evaluate the effect of power control in Fig. \ref{fig_scheme} by applying the geometric programming based power control algorithm in \cite{powercontrol} after the greedy channel assignment. It was shown in \cite{powercontrol} that such a power control algorithm can achieve or come very close to the global optima with very high probability. With power control, we see that the performance is slightly improved. A joint consideration of channel assignment and power allocation will provide additional performance gain, which is beyond the scope of this paper and will be left to our future work.

Fig. \ref{fig_compare} compares our proposed algorithm with the interference-aware algorithm \cite{nphard}, where multiple D2D links are allowed to share the same channel. Since the QoS requirements are not considered in \cite{nphard}, we set the minimum SINR requirements as 0. It is shown that our proposed cluster-based algorithm not only has the ability to deal with more general cases, but also outperforms the interference-aware algorithm.

The average runtime of the sub-optimal algorithm is shown in Fig. \ref{time}, which is also compared with a benchmark in \cite{one2one}. We see that the runtime of the proposed algorithm increases almost linearly with the number of D2D links, which is much better than the worst case complexity given in Section V-C. Even with 20 D2D links, the average runtime is only about twice that of the algorithm in \cite{one2one}, but with much better performance, as shown in Fig. \ref{fig_scheme}.

\subsection{Access Rate Maximization with Full CSI}

In this part, we will consider another objective, i.e., to maximize the access rate as defined in (\ref{weighted sr}). Fig. \ref{ar1} verifies the effectiveness of the proposed algorithms while maximizing the access rate by comparing the optimal DP algorithm and the cluster-based sub-optimal algorithm. We can find that the performance of the cluster-based algorithm is close to that of the DP algorithm, which indicates that our proposed cluster-based algorithm works well when maximizing the access rate. In order to show that different performance metrics will lead to different network designs, we also consider the access rate while maximizing the weighted sum-rate. We can find that the channel assignment that maximizes the weighted sum-rate cannot be directly applied to a network that needs to maximize the access rate.

\subsection{Investigation of Different Partial CSI Scenarios}
\begin{figure}[t]
\centering
\includegraphics[width=3in]{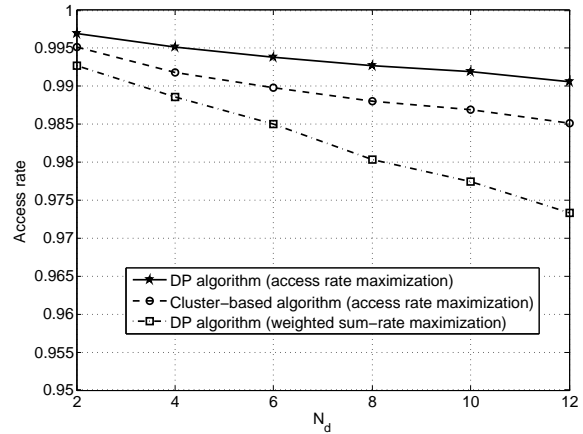}
\caption{Comparison of the access rate for different algorithms with $M_u=N_{uc}=4$ and $M_d=N_{dc}=4$.}
\label{ar1}
\end{figure}

\begin{figure}[t]
\centering
\includegraphics[width=3in]{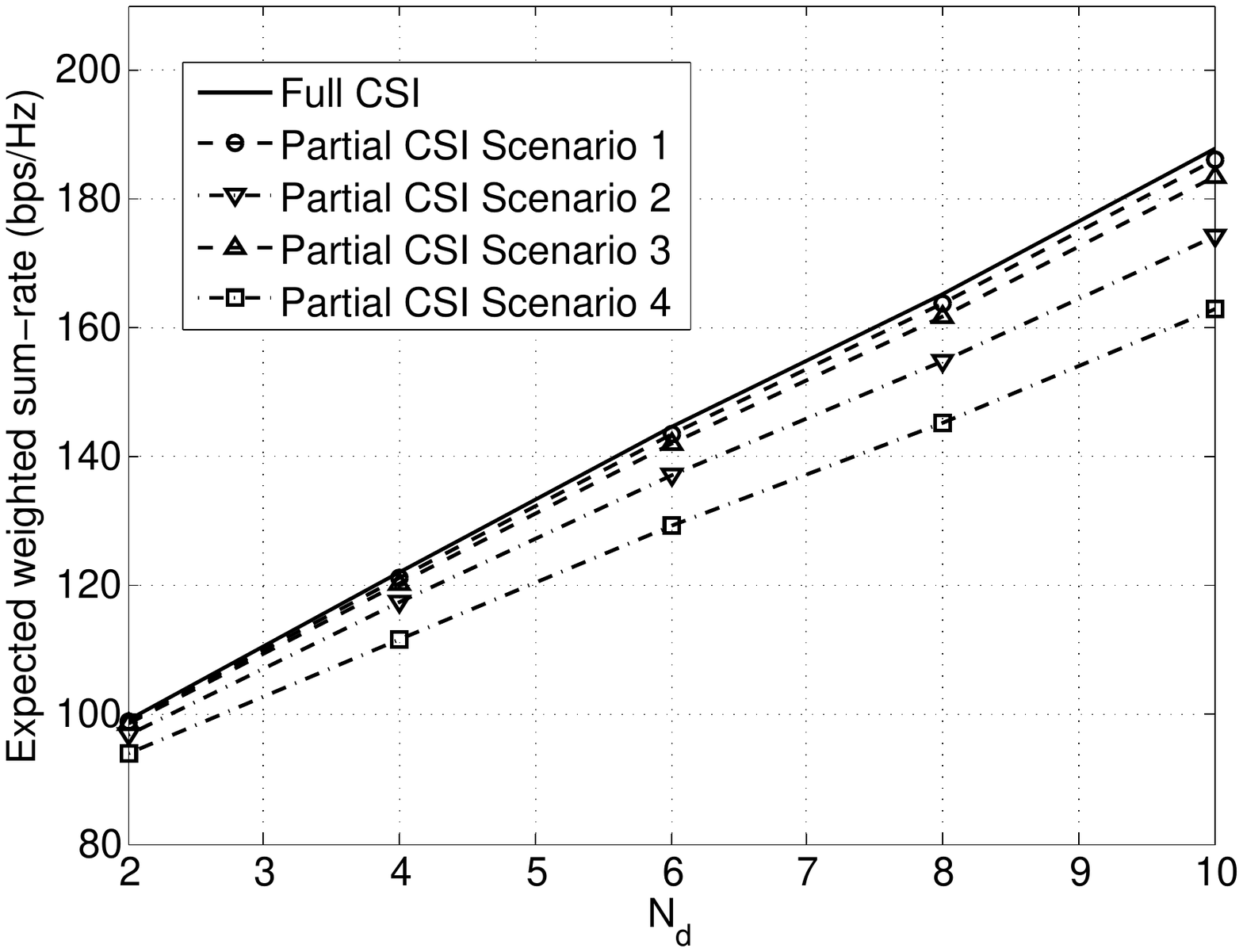}
\caption{Comparison among four scenarios with partial CSI when $M_u=N_{uc}=3$, $M_d=N_{dc}=3$ and cell radius $R=500m$.}
\label{partial1}
\end{figure}

\begin{figure}[t]
\centering
\includegraphics[width=3in]{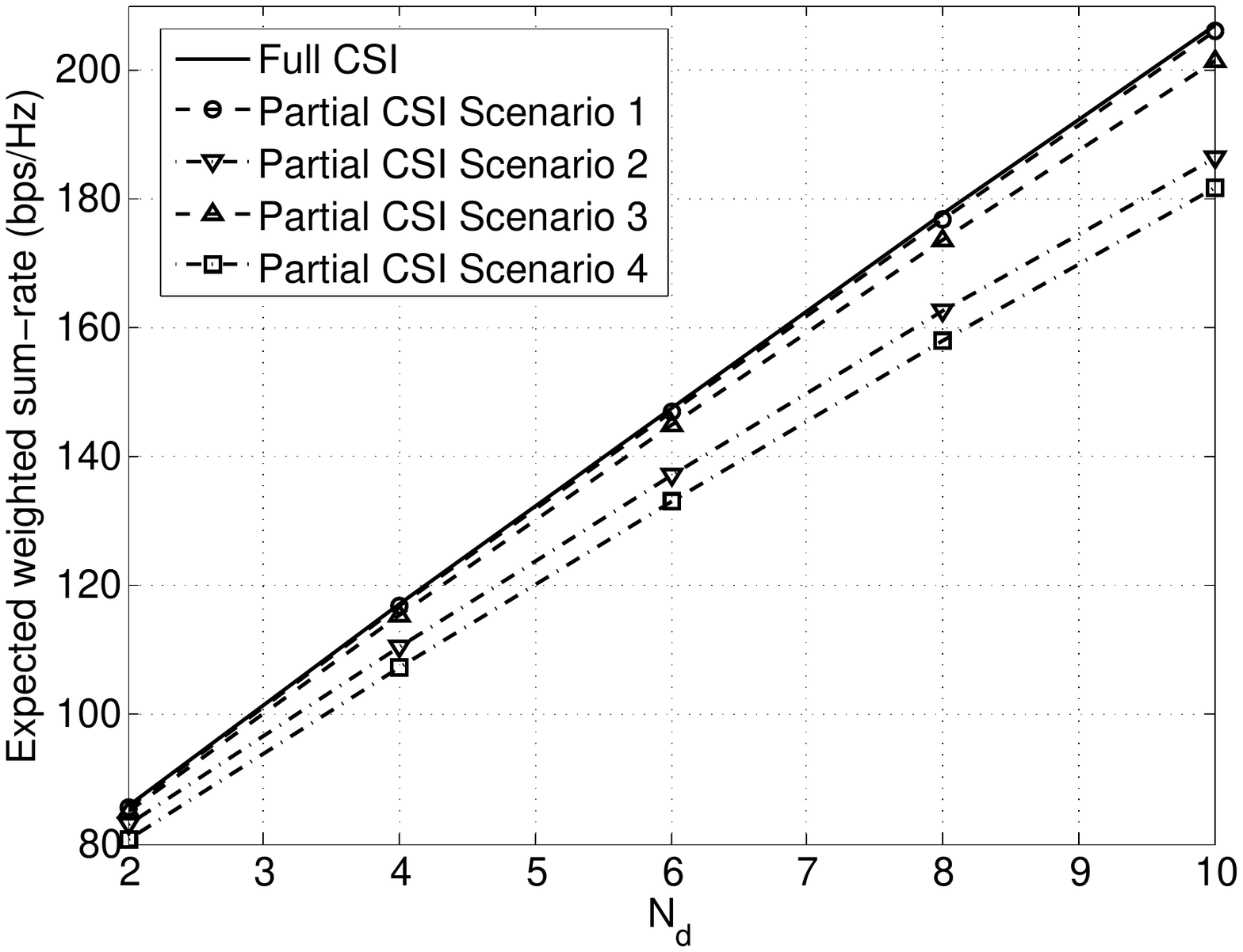}
\caption{Comparison among four scenarios with partial CSI when $M_u=N_{uc}=3$, $M_d=N_{dc}=3$ and cell radius $R=1km$.}
\label{partial2}
\end{figure}
\begin{figure}[t]
\centering
\includegraphics[width=3in]{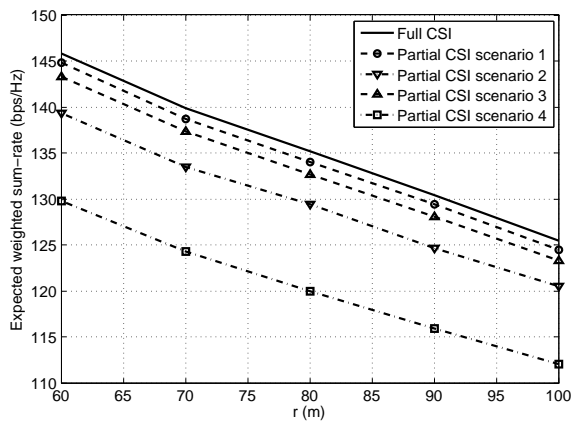}
\caption{Comparison among four scenarios with partial CSI when $M_u=N_{uc}=3$, $M_d=N_{dc}=3$, $N_d=6$, and cell radius $R=500m$. Meanwhile, Ricean fading channel, with a $K$-factor as $3$ dB, is assumed for each D2D link, and Rayleigh fading channels are assumed for cellular and interference links.}
\label{partial3}
\end{figure}

In this part, we will demonstrate the relative importance of the CSI of different links by comparing different partial CSI scenarios. The simulation results are shown in Fig. \ref{partial1} to Fig. \ref{partial4}, where the four partial CSI scenarios introduced in Section III-A are compared using the optimal DP algorithm, with different numbers of D2D links, different cell radii and different D2D group radii. The maximum weighted sum-rate in the full CSI scenario is also shown as the performance benchmark. Firstly, recall that the only difference between Scenarios 1 and 2 is whether the BS has knowledge of the small-scale fading gains of the D2D communication links. Since the gap between Scenarios 1 and 2 is relatively large, we can conclude that \emph{the CSI of the D2D communication links has a significant effect on the performance}. Secondly, recall that the only difference between Scenarios 1 and 3 is whether the BS knows the small-scale fading gains of the interference links from the BS to the D2D receivers. We see that the performance of Scenario 3 is close to that of Scenario 1. Thus, \emph{the D2D receivers may not report the CSI of the interference links from the BS to the D2D receivers}. As shown in Section VI-A, the D2D links prefer to share the uplink spectrum, and, thus, \emph{the CSI of the interference links in the downlink spectrum is not critical}. Finally, the only difference between Scenarios 3 and  4 is whether the BS has knowledge of the CSI of the interference links from the D2D transmitters to the BS. Since the gap between Scenarios 3 and  4 is relatively large, \emph{the CSI of the interference links from the D2D transmitters to the BS has a significant effect on the performance}. As a conclusion, the expected weighted sum-rate in both Scenarios 1 and 3 is close to that of the full CSI scenario, and Scenario 3 is a good choice to balance the network overhead and performance, i.e., we should try to obtain the CSI in practical D2D networks as in Scenario 3.

According to the results in Figs. \ref{partial1} - \ref{partial4}, such a conclusion still holds when we change different parameters, namely, the number of D2D links, the cell radius, the D2D group radius, and channel fading model. In addition, when the cell radius becomes larger, as in Fig. \ref{partial2} and Fig. \ref{partial4}, the gap between the full CSI scenario and Scenario 2, where the CSI of the D2D communication links cannot be acquired by the BS, becomes larger. Thus, when the cell radius increases, the CSI of the D2D communication links becomes more important for the network performance. This is because, for a larger cell radius, more D2D links can be served.

As Scenario 3 has been shown to be a favorable partial CSI scenario, we further compare the performance of the cluster-based sub-optimal algorithm with the optimal DP algorithm in Fig. \ref{partial5}. It is shown that the performance of the cluster-based sub-optimal algorithm is very close to that of the optimal algorithm, which indicates that it can also provide near-optimal performance in the partial CSI case.

\begin{figure}[!t]
  \centering
  \includegraphics[width=3in]{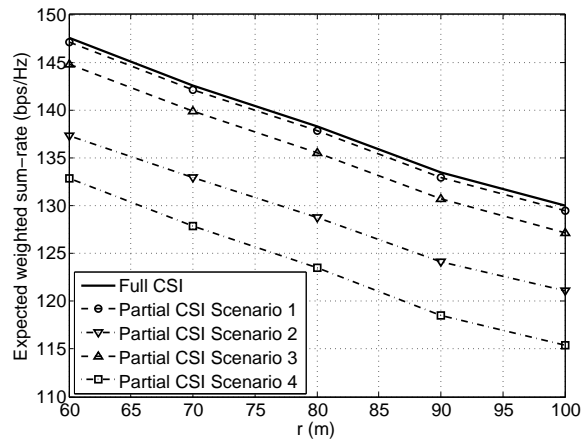}
\caption{Comparison among four scenarios with partial CSI when $M_u=N_{uc}=3$, $M_d=N_{dc}=3$, $N_d=6$ and cell radius $R=1km$.}
  \label{partial4}
\end{figure}
\begin{figure}
  \centering
  \includegraphics[width=3in]{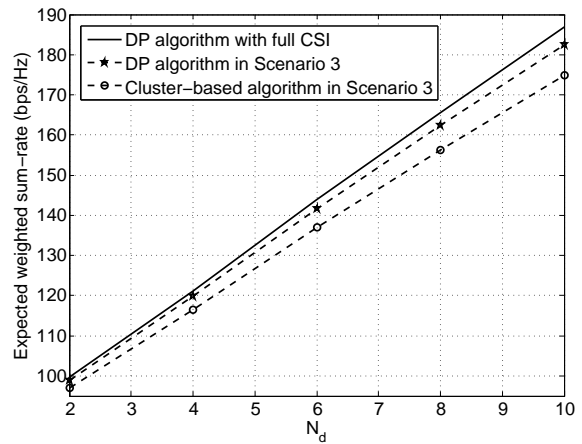}
  \caption{Comparison of the weighted sum-rate for different algorithms with $M_u=N_{uc}=3$ and $M_d=N_{dc}=3$.}
  \label{partial5}
\end{figure}

\section{conclusions}
In this paper, we investigated the channel assignment problem with partial CSI for D2D communications where multiple D2D links are allowed to share the same channel. Meanwhile, the minimum successful transmission probability requirements for both cellular and D2D links are enforced. An optimal DP algorithm and a cluster-based sub-optimal algorithm were proposed. To avoid high-complexity numerical integrations, we derived closed-form expressions for the expected weighted sum-rate and successful transmission probabilities. Simulation results demonstrated that our proposed cluster-based algorithm provides performance close to that of the optimal algorithm. Furthermore, we found that D2D links are more likely to use the uplink spectrum compared to the downlink spectrum. By comparing four different scenarios with partial CSI, we observed that the knowledge of the CSI of the D2D communication links and the interference links from the D2D transmitters to the BS has a significant effect on the network performance, while the knowledge of the CSI of the interference links from the BS to the D2D receivers does not significantly affect the network performance. For future works, it would be interesting to extend the proposed algorithms to the multi-cell scenario, and investigate distributed algorithms for practical implementation. It will also be interesting to allow each user to access multiple channels, for which the proposed DP algorithm can be extended with increased complexity, while new heuristic sub-optimal algorithms will be needed. Joint mode selection and channel assignment is also important, with some initial results reported in \cite{jointmode}.

\section*{Appendix}
\subsection{Proof of Proposition 1}
From definitions, when $|\mathcal{L}'_i|>0$, the successful transmission probability and expected rate are 
\begin{align} \label{proborigin}
&p^s_{i,j} = \notag \\
&\begin{cases}
\int^{\eta_{i,j}}_0 f_{Y_{i,j}\left( \mathcal{L}'_i \right)} \left( y \right) dy & \text{if } \beta_{i,j,j} \text{ is known}, \\
\int_{\frac{\xi^{\text{min}}_{j} \nu }{ \lambda_{j,j}}}^{\infty}\int_{0}^{\iota_j} f_{Y_{i,j}\left( \mathcal{L}'_i \right)} \left( y \right) dy f_{j} \left( x \right) dx & \text{if $\beta_{i,j,j}$ is unknown},
\end{cases}
\end{align}
and
\begin{align} \label{rateorigin}
r_{i,j}= \left\{
\begin{array}{ll}
\int^{ \eta_{i,j}}_0 \log \left(  1+ \frac{\lambda_{j,j}\beta_{i,j,j} }
{\nu  + y} \right) & \multirow{2}{*}{$\text{if } \beta_{i,j,j} \text{ is known},$} \\
\times f_{Y_{i,j}\left( \mathcal{L}'_i \right)} \left( y \right) dy  &  \\
\int_{\frac{\xi^{\text{min}}_{j} \nu }{ \lambda_{j,j}}}^{\infty}\int_{0}^{\iota_j} \log \left(  1+ \frac{\lambda_{j,j}\beta_{i,j,j} }
{\nu  + y} \right) & \multirow{2}{*}{$\text{if $\beta_{i,j,j}$ is unknown}.$} \\
\times f_{Y_{i,j}\left( \mathcal{L}'_i \right)} \left( y \right) dy f_{j} \left( x \right) dx &
\end{array}
\right.
\end{align}
As interference links are Nakagami fading, $Y_{i,j}\left( \mathcal{L}'_i \right)$ is a sum of Gamma random variables. From \cite{sumofgamma}, we can get its pdf as
\begin{align} \label{epdfgamma}
& f_{Y_{i,j}\left( \mathcal{L}'_i \right)} \left( y \right) =
\prod \limits_{z \in \mathcal{L}'_i} \left( \frac{\lambda_{z,j}}{m_{z,j}} \right)^{m_{z,j}} \notag \\
&\times \sum \limits_{n=0}^{+ \infty} \frac{\delta_n y^{\rho+n-1} (\theta^{\max}_j)^{n}  \exp(-y\theta^{\max}_j)}{\Gamma \left(\rho+n \right)}, y \geqslant 0,
\end{align}
where
\begin{align} \label{refdelta}
\delta_0= &1, \notag \\
\delta_{n}= &\frac{1}{n} \sum \limits_{l=1}^{n} \left[ \sum \limits_{z \in \mathcal{L}'_i} m_{z,j} \left( 1- \frac{\lambda_{z,j}}{\theta^{\max}_j m_{z,j}} \right)^l \right] \delta_{n-l}, \notag \\
& n=1,2,\cdots.
\end{align}
In \cite{sumofgamma}, the infinite series in (\ref{epdfgamma}) is proved to be uniform convergence and can be accurately approximated by finite terms in practice. Substituting (\ref{epdfgamma}) into (\ref{proborigin}) and applying Eqn. (3.381-1) in \cite{int}, we can get the successful transmission probability as in (\ref{propnakagami}). Substituting (\ref{epdfgamma}) into (\ref{rateorigin}), we can get the expected rate as in (\ref{expectedrate}). If $|\mathcal{L}'_i| = 0$, the SINR of D2D link $j$ becomes $\xi_{i,j} \left( \mathcal{L}'_i \right) =\frac{{\lambda_{j,j} \beta_{i,j,j} }}{{\nu  }}$, where only $\beta_{i,j,j}$ could be a random variable. Thus, it is straightforward to calculate $p^s_{i,j}$ and $r_{i,j}$.

\subsection{Proof of Lemma 1}
When $\beta_{i,j,j}$ is known, according to (\ref{refmu}), $\mu_0=\int_0^{\eta_{i,j}} \ln\left( 1+ \frac{\lambda_{j,j}\beta_{i,j,j}}{\nu+y}\right) e^{ -y \theta_j^{\max} } dy$. Using the integration by parts and applying Eqn. (3.352-1) in \cite{int}, we can get its closed-form expression as in (\ref{closedmu}), where 
$D(t_1,t_2)=\left( \theta_j^{\max} \right)^{-1}\big\{\ln(t_2) - \ln (t_1+t_2) e^{-t_1 \theta_j^{\max}}+ e^{t_2 \theta^{\max}_j } \big[ \text{Ei} \left(- (t_1+t_2)\theta^{\max}_j \right) - \text{Ei} (-t_2 \theta^{\max}_j ) \big]\big\}.$
Using the integration by parts, the recurrence relation of $\mu_k$ can be obtained as in (\ref{closedmu}), where
\begin{equation}
A_k(t_1,t_2)=\ln \left(1+ \frac{t_2}{t_1 +\nu} \right) t_1^k \exp \left( -t_1  \theta^{\max}_j \right),
\end{equation}
and $E_k(t_1,t_2)= \int_0^{t_1} \frac{y^n}{t_2+y} \exp (-y \theta_j^{\max}) dy$. By binomial theorem, we can rewrite $E_k(t_1,t_2)$ as
\begin{align}
E_k & (t_1,t_2)= \int_0^{\infty} \frac{y^k e^{-y \theta_j^{\max}}}{t_2+y} dy-e^{t_2\theta_j^{\max}} \sum \limits_{l=0}^{k} \binom{k}{l} t_1^{k-l} \notag \\
& \times \int_{t_1+t_2}^{\infty} \frac{(y'-t_1-t_2)^l}{y'} \exp (-y' \theta_j^{\max}) dy'
\end{align}
By Eqn. (3.383-9) and Eqn. (3.383-10) in \cite{int}, the closed-form of $E_k(t_1,t_2)$ can be obtained as
\begin{align*}
E_k & (t_1,t_2) = e^{ \theta_j^{\max} t_2} \bigg[t_2^k \Gamma(k+1) \Gamma(-k,t_2 \theta^{\max}_j) \notag \\
&-\sum \limits_{l=0}^{k} \binom{k}{l} t_1^{k-l} \left( t_1+t_2 \right)^l \Gamma(l+1) \Gamma \left(-l,(t_1+t_2)\theta^{\max}_j \right) \bigg].
\end{align*}
When $\beta_{i,j,j}$ is unknown, $\mu_k$ can be derived similarly. 

\subsection{Proof of Corollary 1}
Under Rayleigh fading, $Y_{i,j} \left( \mathcal{L}'_i \right)$ can be written as a linear combination of exponential random variables. Then, based on \cite{expr}, the pdf of $Y_{i,j} \left( \mathcal{L}'_i \right)$ is
\begin{align} \label{epdfce}
& f_{Y_{i,j}\left( \mathcal{L}'_i \right)} \left( y \right) = \notag \\
& \sum \limits_{z \in \mathcal{L}'_i} \left( \prod \limits_{k \in \mathcal{L}'_i, k \ne z} \left( \lambda_{z,j}-\lambda_{k,j}\right) ^{-1}\right) \lambda_{z,j}^{|\mathcal{L}'_i|-2} e^{-y/\lambda_{z,j}} \text{ if } y \geqslant 0.
\end{align}
By substituting (\ref{epdfce}) into (\ref{proborigin}) and (\ref{rateorigin}), $p^s_{i,j}$ can be derived as (\ref{p1_pr}), and the expected rate is
\begin{align}
r_{i,j}=&\sum \limits_{z \in \mathcal{L}'_i} \left( \prod \limits_{k \in \mathcal{L}'_i, k \neq z} \left( \lambda_{z,j}-\lambda_{k,j}\right) ^{-1}\right) \lambda_{z,j}^{|\mathcal{L}'_i|-2} \notag \\
& \times \int^{\eta_{i,j}}_0 \log \left(  1+ \frac{\lambda_{j,j}\beta_{i,j,j} }
{\nu  + y} \right) e^{-y/\lambda_{z,j}} dy. \label{c-d-e1}
\end{align}
Then, by partial integration and Eqn. (3.352-1) in \cite{int}, we can get the closed-form expression of the expected rate.

\subsection{Proof of Corollary 2}
Substituting (\ref{epdfce}) into (\ref{proborigin}), the success probability can be derived as (\ref{p2_pr}). Meanwhile, the expected rate can be written as
\begin{align*}
r_{i,j}
=& \sum \limits_{z \in \mathcal{L}'_i} \left( \prod \limits_{k \in \mathcal{L}'_i, k \ne z} \left( \lambda_{z,j}-\lambda_{k,j}\right) ^{-1}\right) \lambda_{z,j}^{|\mathcal{L}'_i|-2} \notag \\
& \times \int_0^{\infty}\int_{\frac{\xi^{\text{min}}_{j} \left( \nu + y \right)}{\lambda_{j,j}} }^{\infty}\log \left( 1+\frac{\lambda_{j,j}x}{\nu +y} \right) e^{ \left( -\frac{y}{\lambda_{z,j}} -x \right)} dx dy.
\end{align*}
By replacing $y$ with $t=\frac{\lambda_{j,j}x}{\xi^{\text{min}}_{j}\left(\nu +y\right)}$, and applying partial integration and Eqn. (3.352-2) in \cite{int}, we can get the closed-form expression of the expected rate for $|\mathcal{L}'_i| >0$.
If $|\mathcal{L}'_i| = 0$, the SINR of D2D link $j$ becomes
$\xi_{i,j} \left( \mathcal{L}'_i \right) =\frac{{\lambda_{j,j} \beta_{i,j,j} }}
{{\nu  }}$,
where only $\beta_{i,j,j}$ is a random variable. Then, by Eqn. (3.352-2) in \cite{int}, we can easily get the successful transmission probability and expected rate for D2D link $j$ using channel $i$ for $|\mathcal{L}'_i| =0$.

\bibliographystyle{IEEEtran}
\bibliography{IEEEabrv,ref}

\end{document}